\definecolor{grayfill}{RGB}{195,195,195}
\definecolor{darkgrayfill}{RGB}{150,150,150}
\definecolor{hr}{RGB}{85,85,85}
\definecolor{thr}{RGB}{200,200,200}
\definecolor{edHr}{RGB}{190,20,50}
\definecolor{edThr}{RGB}{230,110,130}
\definecolor{fill1}{RGB}{250,156,94}		
\definecolor{fill2}{RGB}{247,253,86}		
\definecolor{fill3}{RGB}{135,246,121}		
\definecolor{fill4}{RGB}{161,248,239}		
\definecolor{fill5}{RGB}{198,142,245}		
\definecolor{fill6}{RGB}{185,168,139}		
\definecolor{highlight}{RGB}{7,26,208}
\def\zal#1{\filldraw[draw=fill#1 , fill=fill#1 , opacity=0.55]}
\def\zall#1{\fill[fill=fill#1 , fill opacity=0.55]}
\begin{document}

\title{A complete list of all convex polyhedra made by gluing regular pentagons%
\thanks{E. A. was supported in part by F.R.S.-FNRS, and by the SNF grant P2TIP2-168563 of the Early PostDoc Mobility program. E.~A. and B.~Z. are partially supported by the Foundation for the Advancement of Theoretical Physics and Mathematics ``BASIS'' and by ``Native towns'', a social investment program of PJSC ``Gazprom Neft''. S. L. is directeur de recherches du F.R.S.-FNRS.}}

\author{Elena Arseneva\thanks{
		St. Petersburg State University (SPbU).
		Emails: e.arseneva@spbu.ru, boris.a.zolotov@yandex.com.
	} \and
	Stefan Langerman\thanks{
		Universit\'e libre de Bruxelles (ULB). Email: stefan.langerman@ulb.ac.be.
	} \and
	Boris Zolotov$^\dagger$}

\maketitle

\begin{abstract}
We give a complete description of all convex polyhedra whose surface can be constructed from several congruent regular pentagons by folding and gluing them edge to edge. Our method of determining the graph structure of the polyhedra from a gluing is of independent interest and can be used in other similar settings.
\end{abstract}

\section{Introduction}
Given a collection of 2D polygons, a \emph{gluing} describes a closed surface by specifying how to glue (a part of) each edge of these polygons onto (a part of) another edge. Alexandrov's uniqueness theorem~\cite{alex} states that any valid gluing that is homeomorphic to a sphere and that does not yield a total facial angle greater than $2\pi$ at any point, corresponds to the surface of a unique convex 3D polyhedron (doubly covered convex polygons are also regarded as polyhedra). Note that the original polygonal pieces might need to be folded to obtain this 3D surface.

Unfortunately, the proof of Alexandrov's theorem is highly non-constructive. The only known approximation algorithm to find the vertices of this polyhedron~\cite{kpd09-approx} has (pseudopolynomial) running time really large in $n$, where $n$ is the total complexity of the gluing. In particular, its running time depends on $n$ as $\tilde{O}(n^{578.5})$, and it also depends on the aspect ratio of the polyhedral metric, the Gaussian curvature at its vertices, and the desired precision of the solution. There is no known exact algorithm for reconstructing the 3D polyhedron, and in fact the coordinates of the vertices of the polyhedron might not even be expressible as a closed formula~\cite{bannister2014galois}.

Enumerating all possible valid gluings is also not an easy task, as the number of gluings can be exponential even for a single polygon~\cite{DDLO02}. However one valid gluing can be found in polynomial time using dynamic programming~\cite{DO07,lo96-dynprog}. Complete enumerations of gluings and the resulting polyhedra are only known for very specific cases such as the Latin cross~\cite{ddlop99} and a single regular convex polygon~\cite{DO07}.

The special case when the polygons to be glued together are all identical regular $k$-gons, and the gluing is \emph{edge-to-edge} was recently studied by the first two authors of this paper~\cite{kl17-hex}. For $k>6$,  the only two possibilities are two $k$-gons glued into a doubly-covered $k$-gon, or one $k$-gon folded in half (if $k$ is even). When $k=6$, the number of hexagons that can be glued into a convex polyhedron is unbounded. However, for non-flat polyhedra of this type there are at most ten possible graph structures. For six structures out of these ten, the gluings realizing them have been found. For doubly-covered 2D polygons, all the possible polygons and the gluings forming them have been characterized.

In this paper we continue this study by thoroughly considering the case of $k=5$, i.e., gluing regular pentagons edge to edge. This setting differs substantially from the case of hexagons, since it is not possible to produce a flat vertex by gluing regular pentagons. Therefore both the number of possible graph structures and the number of possible gluings is finite and little enough to study each one of them individually. 

We start by enumerating  all edge-to-edge gluings of regular pentagons satisfying the conditions of the Alexandrov's Theorem (Section~\ref{sec:enum}). After that we solve the problem of establishing the graph structure of the convex polyhedra corresponding to each such gluing $G$. Using the existing methods (implementation~\cite{sech} of the Bobenko-Izmestiev algorithm~\cite{boben}), we obtain an approximate polyhedron $P$ for gluing $G$. With the help of a computer program, we generate a certificate that the edges of these approximate polyhedra are present in the sought polyhedra. In particular, we upper bound the discrepancy in vertex coordinates between the unique convex polyhedron corresponding to $G$ a given approximate polyhedron (Theorem~\ref{thm:precision}), which implies a sufficient condition for the polyhedron to have a certain edge (Theorem~\ref{thm:code-whattocheck}). Our computer program checks this condition automatically. For non-simplicial approximate polyhedra $P$, to prove that there are no additional edges present in the sought polyhedra, we resort to ad-hoc geometric methods, using symmetry arguments and reconstructing the process of gluing the polyhedron (Section~\ref{section:geom}).

While the main outcome of this work is the full list of the convex polyhedra that are obtained by gluing regular pentagons edge to edge (Section~\ref{section:complete}), the methods for obtaining it are of independent interest and may be applied to other problems of the same flavour.

\section{Preliminaries and definitions}

In this section we review definitions and previous results that are necessary for the rest of this paper. We start with some basic notions.

By a polyhedron we mean a three-dimensional polytope, and, unless stated otherwise, all the polyhedra we are considering are convex. Doubly-covered convex polygon is also regarded as a convex polyhedron. A polyhedron is called \emph{simplicial} if all its faces are triangles.

Consider an edge $e$ of a polyhedron; and let  $f_1$ and $f_2$ be the two faces of the polyhedron that are incident to $e$. We call a vertex in $f_1$ or $f_2$ {\it opposite to $e$} if it is not incident to $e$. If $f_1$ and $f_2$ are triangles, then there are exactly two vertices opposite to~$e$, see Figure~\ref{fig:oppPoly}.

\begin{figure} \centering
\begin{minipage}[b]{0.41\textwidth}
	\centering
	\includegraphics[height=40mm]{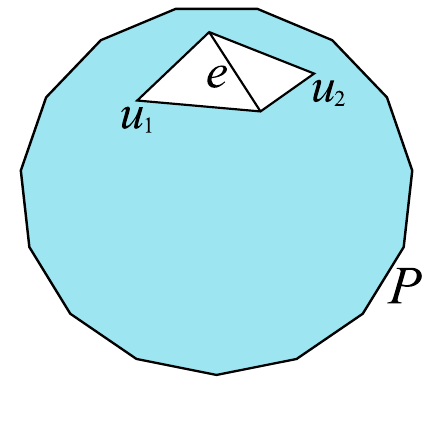}
	\caption{Vertices $u_1$ and $u_2$ are opposite \\ to edge $e$ of polyhedron $P$.}
	\label{fig:oppPoly}
\end{minipage}\hspace{1.15cm}
\begin{minipage}[b]{0.41\textwidth}
	\centering
	\tikz[scale=1.5]{ \def\fl#1#2#3{(0.94868 * #1 cm + 0.47 * #2 cm,
	-0.31623 * #1 cm + 0.67 * #2 cm + 0.9 * #3 cm)}
\def\A{\fl{-1}{-1}{0}}
\def\B{\fl{-1}{1}{0}}
\def\C{\fl{1}{1}{0}}
\def\D{\fl{1}{-1}{0}}
\def\H{\fl{0}{0}{1.41421}}
	
\draw[color=thr] \B -- \C;

\zall 1 \D -- \C -- \H -- cycle;
\zall 3 \A -- \D -- \H -- cycle;
\zall 5 \A -- \B -- \H -- cycle;

\draw \A -- \H;
\draw \B -- \H;
\draw \C -- \H;
\draw \D -- \H;
\draw \B -- \A -- \D -- \C;

\draw \D node[below]{$\ll 2 - \frac{1}{2} - \frac{2}{3} \rr \pi = \frac{5}{6} \pi$};
\draw \A node[left]{$\frac{5}{6} \pi$};
\draw \B node[left]{$\frac{5}{6} \pi$};
\draw \C node[right]{$\frac{5}{6} \pi$};

\draw \H node[above]{\qquad $\ll 2 - \frac{4}{3} \rr \pi = \frac{2}{3} \pi$}; }
     \caption{Gaussian curvature of \\ the vertices of a convex pentahedron.}
     \label{fig:gauscurv}
\end{minipage}
\end{figure}

\begin{definition} \label{def:gauscurv}
	Let $P$ be a convex polyhedron. The \emph{Gaussian curvature} at a vertex $v$ of $P$ equals
	    $\ll 2\pi - \sum_{j=1}^t{\alpha^v_j}\rr$,
	where $t$ is the number of faces of $P$ incident to $v$, and  $\alpha^v_j$ is the angle at $v$ of the $j$-th face incident to $v$.
\end{definition}
	
	Since $P$ is convex, the Gaussian curvature at each vertex of $P$ is non-negative.

\begin{theorem}[Gauss, Bonnet 1848]
	The total sum of the Gaussian curvature of all vertices of a 3D polyhedron $P$ equals $4\pi$.
\end{theorem}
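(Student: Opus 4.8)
The plan is to rewrite the total curvature as a sum over \emph{faces} instead of vertices and then apply Euler's polyhedral formula. Let $V$, $E$, $F$ denote the numbers of vertices, edges, and faces of $P$. Using Definition~\ref{def:gauscurv}, the total Gaussian curvature is
\[
\sum_{v}\ll 2\pi - \sum_{j=1}^{t_v}\alpha^v_j\rr = 2\pi V - S,
\]
where $S$ is the sum of all facial angles of $P$, i.e. the sum over every face of $P$ and every corner of that face. The first step, then, is just this bookkeeping identity: each face angle $\alpha^v_j$ is counted exactly once in the double sum $\sum_v\sum_j\alpha^v_j$, and regrouping by face rather than by vertex turns it into $S$.

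The second step is to evaluate $S$ face by face. A face of $P$ that is an $m$-gon has interior angles summing to $(m-2)\pi$, so if the faces have $m_1,\dots,m_F$ edges respectively, then $S = \pi\sum_{i=1}^{F}(m_i-2) = \pi\sum_{i=1}^{F} m_i - 2\pi F$. Since every edge of a polyhedron is incident to exactly two faces, $\sum_{i=1}^{F} m_i = 2E$, and therefore $S = 2\pi E - 2\pi F$.

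Combining the two steps, the total Gaussian curvature equals $2\pi V - (2\pi E - 2\pi F) = 2\pi(V - E + F)$. The surface of a convex polyhedron (including a doubly covered convex polygon, which has $V$ boundary vertices, $V$ edges, and $2$ faces) is homeomorphic to a sphere, so Euler's formula gives $V - E + F = 2$, and the total curvature is $4\pi$, as claimed. There is no real obstacle here beyond invoking Euler's formula, which is the one nontrivial input; the only point that deserves a brief check is that the doubly covered polygon is handled correctly by the same count, which it is.
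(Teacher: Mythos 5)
Your proof is correct and complete: the double count of facial angles gives $2\pi V - S$ with $S = 2\pi E - 2\pi F$, and Euler's formula finishes it, with the doubly-covered polygon case checked. The paper itself states this as a classical cited result (Gauss--Bonnet) and provides no proof, so there is nothing to compare against; your argument is the standard one and fills that gap soundly.
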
	

	For an example, see Figure~\ref{fig:gauscurv} that shows a convex pentahedron and the values of Gaussian curvature at each of its vertices.

\begin{definition} \label{def:gluingDef}
	{\it A gluing $G$} is a collection of polygons $T_1 \ldots T_n$ equipped with an equivalence relation $\sim$ on their border describing how the polygons should be glued to one another.
\end{definition}

\begin{definition} \label{def:polyMetric}
	The {\it polyhedral metric} $M$ of a gluing $G$ is the intrinsic metric of the simplicial complex corresponding to $G$: the distance between two points of the gluing is the infimum of the lengths of the polygonal lines joining the points such that each vertex of it is within one of the polygons $T_1 \ldots T_n$.
\end{definition}

We denote the distance between points $p$, $q$ of $G$ by $|pq|$.

\begin{definition} \label{def:alexCond}
	Gluing $G$ (and the polyhedral metric corresponding to it) is said to satisfy \emph{Alexandrov's conditions} if: \begin{itemize}
	     \item[a)] the topological space produced by $G$ is homeomorphic to a sphere, and
	     \item[b)] the total sum of angles at each of the vertices of $G$ is at most $2\pi$. 
	\end{itemize}
\end{definition}

\begin{theorem}[Alexandrov, 1950,~\cite{alex}]
\label{thm:alexandrov}
	If a gluing $G$ satisfies Alexandrov's conditions then this gluing corresponds to a unique convex polyhedron $\P (G)$: that is, the polyhedral metric of $G$ and the shortest-path metric of the surface of $\P (G)$ are equivalent.
\end{theorem}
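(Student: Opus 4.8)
The plan is to prove the two assertions contained in the statement---existence and uniqueness of $\P(G)$---by essentially separate arguments, following Alexandrov's classical route.

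\emph{Uniqueness.} I would deduce it from Alexandrov's rigidity theorem: two closed convex polyhedra with isometric boundary surfaces are congruent. First I would fix a common geodesic triangulation of the two surfaces, refined so that all cone points are among its vertices; then corresponding edges have equal length and only the dihedral angles can differ. Assuming the polyhedra are not congruent, I would label each edge $+$, $0$, or $-$ according to the sign of the difference of the two corresponding dihedral angles, and reach a contradiction via Cauchy's arm lemma for convex spherical polygons: comparing, at each vertex $v$, the spherical polygon cut by the incident faces on a small sphere about $v$ in the two polyhedra---these polygons have equal side lengths (the face angles) but possibly different vertex angles (the dihedral angles)---one sees that the cyclic sequence of nonzero labels on the edges through $v$ either is empty or changes sign at least four times. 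A global count of these sign changes against Euler's formula for the triangulated sphere then forces every label to vanish, so all dihedral angles agree and the two polyhedra coincide up to congruence. The doubly-covered-polygon case I would treat directly or as a degenerate limit.

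\emph{Existence.} Here I would use the deformation (continuity) method. Let $\mathcal{M}$ be the space of polyhedral metrics on $S^2$ that satisfy Alexandrov's conditions and have at most $N$ cone points, with $N$ large enough that the metric of $G$ qualifies (it has finitely many cone points); I would check that $\mathcal{M}$ is connected, using that one may always merge cone points to pass between its strata, and that it is a manifold of dimension $3N-6$ near a generic metric. Let $\mathcal{P}$ be the space of convex polyhedra with at most $N$ vertices, doubly-covered polygons included, taken up to congruence; it has the same dimension, and the map $\Phi\colon\mathcal{P}\to\mathcal{M}$ sending a polyhedron to its intrinsic surface metric is continuous. By the uniqueness part $\Phi$ is injective, and the infinitesimal analogue of the sign-counting argument above---convex polyhedra admit no nontrivial isometric infinitesimal deformation---shows $\Phi$ has everywhere nonsingular differential, hence is a local homeomorphism; by invariance of domain its image is open. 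To finish I would show the image is closed: if $\Phi(P_i)\to M$, the metrics bound the diameters of the $P_i$, so by the Blaschke selection theorem a subsequence converges to a convex body $P$, which one verifies is a polyhedron with $\Phi(P)=M$, collapse to a doubly-covered polygon being harmless since that case already lies in $\mathcal{P}$. A nonempty subset of the connected space $\mathcal{M}$ that is both open and closed is all of $\mathcal{M}$, so $\Phi$ is onto; applied to the metric of $G$ this produces $\P(G)$, and injectivity makes it the unique such polyhedron.

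\emph{Main obstacle.} The heart of the matter is the rigidity input, needed twice---globally for uniqueness and injectivity, and infinitesimally for the local-homeomorphism step---and both times resting on the delicate sign-change combinatorics on a triangulated sphere together with Cauchy's arm lemma. A secondary difficulty is the closedness step, where one must exclude pathological degenerations of the family $P_i$ (collapsing slivers, escaping vertices); fixing the total curvature at $4\pi$ and the number of cone points at most $N$ is exactly what keeps this under control. A more constructive alternative, and the one underlying the algorithmic tools invoked later in this paper, is the variational proof of Bobenko and Izmestiev~\cite{boben,sech}: there $\P(G)$ appears as the critical point of a concave ``discrete total scalar curvature'' functional on the space of admissible radius assignments, which additionally yields a convergent algorithm.
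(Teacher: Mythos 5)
The paper does not prove this statement at all: it is Alexandrov's classical existence-and-uniqueness theorem, imported verbatim from~\cite{alex} and used as a black box (what the paper actually needs from it is uniqueness, which underwrites the certification strategy of Sections~\ref{section:algorithmic} and~\ref{section:geom}, together with the constructive approximations of~\cite{boben,sech}). So there is no proof in the paper to compare yours against; judged on its own terms, your outline follows the standard classical route --- Cauchy-type rigidity via the arm lemma and the sign-change count for uniqueness, and the deformation (continuity) method for existence --- and is essentially faithful. Two points deserve flagging. First, in the uniqueness step you should take the vertex set of the common geodesic triangulation to be \emph{exactly} the set of cone points, not merely to contain them: at a flat auxiliary vertex the spherical link has perimeter $2\pi$ and the arm-lemma/sign-change argument degenerates, whereas a geodesic triangulation with vertices precisely at the cone points always exists. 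Second, your space $\mathcal{M}$ of metrics with \emph{at most} $N$ cone points is a stratified space rather than a manifold, so invariance of domain does not apply at non-generic points; Alexandrov's actual argument fixes the number of cone points at exactly $n$ (giving a connected $(3n-6)$-dimensional manifold) and reaches metrics with fewer cone points by induction on $n$. Both issues are repairable and are genuinely the delicate parts of the classical proof. Your closing remark is apt: the variational proof of Bobenko and Izmestiev is the one this paper leans on computationally, since it also yields the approximate polyhedra that the certification procedure takes as input.
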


Correspondence to a polyhedron discribed in this theorem intuitively means that $\P (G)$ can be glued from polygons of $G$ in accordance with relation~$\sim$. Note that polygons of $G$ need not correspond to faces of $\P (G)$.

Recall that a chord of a polygon $Q$ is any segment connecting two points on the border of $Q$ that lies completely inside $Q$.

\begin{definition} \label{def:netDef} 
	For a polyhedron $P$, a net of $P$ is a gluing $G = (T_1 \ldots T_n, \sim)$ of $P$  together with the  set of chords of the polygons $T_i$ that do not intersect each other except possibly at endpoints. Those chords represent creases, i.e. lines along which $P$ should be folded from this polygon.
\end{definition}

\section{Gluing regular pentagons together}
\label{sec:enum}

In this section, we describe how to enumerate all the edge-to-edge gluings of regular pentagons.

\subsection{How many pentagons can we glue and which vertices can we obtain?}

	Let $P$ be a convex polyhedron obtained by gluing several regular pentagons edge to edge. Vertices of $P$ are clearly vertices of the pentagons. The sum of facial angles around a vertex $v$ of $P$ equals $3\pi/5$ (the interior angle of a regular pentagon) times the number of pentagons glued together at $v$. Since the Gaussian curvature at $v$ is in $(0,2\pi)$, the number of pentagons glued at $v$ can be either one, two, or three. This yields the Gaussian curvature at $v$ to be respectively $7\pi/5$, $4\pi/5$, or $\pi/5$.

	Note that, as opposed to the case of regular hexagons, it is not possible to produce a vertex of curvature $0$ (which would be a flat point on the surface of $P$) by gluing several pentagons. Therefore all the vertices of the pentagons must correspond to vertices of $P$.  

\begin{prop}
\label{prop:upperbound}
	Suppose $P$ is a convex polyhedron obtained by gluing edge-to-edge $N$ regular pentagons.
	Then: (a) $P$ has $2 + 1.5N$ vertices in total. In particular, $N$ must be even.
	(b) $N$ is at most $12$. 
\end{prop}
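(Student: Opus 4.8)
The plan is to count incidences between pentagons and vertices of $P$ in two ways, and then to combine part~(a) with the Gauss--Bonnet theorem to bound $N$. Let $V$, $E$, $F$ be the numbers of vertices, edges, and faces of $P$. First I would set up the double count of (pentagon, vertex) incidences: each of the $N$ pentagons contributes $5$ such incidences (one per corner), so the total is $5N$. On the other hand, as observed just before the proposition, every vertex $v$ of $P$ has either one, two, or three pentagons meeting at it; write $n_1, n_2, n_3$ for the number of vertices of each type, so $V = n_1 + n_2 + n_3$ and $5N = n_1 + 2 n_2 + 3 n_3$.

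Next I would bring in Gauss--Bonnet. The Gaussian curvatures at the three vertex types are $7\pi/5$, $4\pi/5$, $\pi/5$ respectively, so the theorem gives $\tfrac{7\pi}{5} n_1 + \tfrac{4\pi}{5} n_2 + \tfrac{\pi}{5} n_3 = 4\pi$, i.e. $7 n_1 + 4 n_2 + n_3 = 20$. This single linear equation, together with the incidence identity $n_1 + 2 n_2 + 3 n_3 = 5N$, should be enough. Adding suitable multiples: from $7 n_1 + 4 n_2 + n_3 = 20$ and $n_1 + 2n_2 + 3n_3 = 5N$ one eliminates to get a clean relation. Concretely, I expect the combination $2\cdot(7n_1+4n_2+n_3) + (n_1+2n_2+3n_3)$ or a similar one to yield $V = n_1+n_2+n_3$ in terms of $N$; chasing the arithmetic should give $V = 2 + \tfrac32 N$, which forces $N$ even (since $V$ is an integer). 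That is part~(a).

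For part~(b), I would use that all three coefficients $n_1,n_2,n_3$ are nonnegative integers satisfying $7 n_1 + 4 n_2 + n_3 = 20$, so in particular $n_3 = 20 - 7n_1 - 4n_2 \le 20$, and more to the point $5N = n_1 + 2n_2 + 3n_3$ is maximized over the (finitely many) nonnegative integer solutions of $7n_1 + 4n_2 + n_3 = 20$. The maximum of $n_1 + 2n_2 + 3n_3$ subject to that constraint is attained by making $n_3$ as large as possible relative to its ``cost'' $1$ in the constraint, i.e. at $n_1 = n_2 = 0$, $n_3 = 20$, giving $n_1 + 2n_2 + 3n_3 = 60$, hence $5N \le 60$ and $N \le 12$. (One should double-check that no solution with $n_1$ or $n_2$ positive beats this; since replacing a unit of $n_3$-``budget'' by $n_1$ or $n_2$ costs $7$ or $4$ units of the constraint while only buying $3$ or $2$ of the objective, the bound is indeed extremal at $n_3=20$.) Thus $N \le 12$.

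The main obstacle I anticipate is purely bookkeeping rather than conceptual: making sure the curvature values $7\pi/5, 4\pi/5, \pi/5$ are assigned to the correct vertex types and that the incidence count $5N$ is set up correctly (each pentagon really does contribute exactly one incidence per vertex, using that no vertex of a pentagon is ``absorbed'' into the interior of a face or an edge of $P$ — which is exactly the point made in the paragraph preceding the proposition, that pentagons cannot create a flat vertex). Once those two linear relations are correctly in hand, both parts follow by elementary integer arithmetic; there is no geometric difficulty beyond Gauss--Bonnet.
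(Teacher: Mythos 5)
Your proposal is correct and follows essentially the same route as the paper: the same two linear relations (Gauss--Bonnet giving $7n_1+4n_2+n_3=20$ and the incidence count $n_1+2n_2+3n_3=5N$), the same elimination for part~(a), and the same extremal assignment $n_1=n_2=0$, $n_3=20$ for part~(b). The only quibble is that your guessed combination $2\cdot(7n_1+4n_2+n_3)+(n_1+2n_2+3n_3)$ does not produce $n_1+n_2+n_3$; the correct one is $\tfrac{1}{10}$ times the first equation plus $\tfrac{3}{10}$ times the second, which indeed yields $V=2+1.5N$.
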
  

\begin{proof}
	From the above discussion, the vertices of $P$ can be subdivided into three types according to their Gaussian curvature:
	(1) the ones of curvature $7\pi/5$,
	(2) $4\pi/5$, and
	(3) $\pi/5$.
	Let us denote the number of vertices type 1, 2 and 3, respectively, as $x, y, z$. Then we have the following system of two equations: 

\[ \begin{cases}
	7x+4y+z=20 \\
	x+2y+3z = 5N
\end{cases} \]

The first equation is implied by the Gauss-Bonnet theorem; the second one is obtained by counting the vertices of pentagons, since each polyhedron vertex of type 1, 2 and 3  corresponds to respectively one, two or three pentagon vertices.

(a) By summing up the equations after multiplying the first one by $0.1$ and the second one by $0.3$, we obtain that $x+y+z = 2 + 1.5N$.  

(b) Since $x,y,z$ are non-negative integers, from the first equation we derive that the maximum number of vertices is obtained when $x = 0, y = 0, z = 20$. This assignment corresponds to $N = 12$ by the second equation. \end{proof}

\subsection{Enumerating all possible gluings.}

\label{sec:method}
We used a computer program to list all the non-isomorphic gluings of this type. Our program is a simple modification of the one that enumerates the gluings of hexagons~\cite{kl17-hex}. The gluings are depicted in Figures \ref{fig:net21}, \ref{fig:net22}, \ref{fig:net41}, \ref{fig:net42}, \ref{fig:net43}, \ref{fig:net6}, \ref{fig:net8}, \ref{fig:net12}.

\begin{figure}[h] \centering
\begin{subfigure}[b]{0.4\columnwidth} \centering
	\tikz[scale=1.2]{
	\input{graphics/inscope-1}
	}
     \caption{$\P_{2,1}$}
     \label{fig:1}
\end{subfigure}
~~
\begin{subfigure}[b]{0.4\columnwidth} \centering
	\input{graphics/2}
	\caption{$\P_{2,2}$}
	\label{fig:2}
\end{subfigure}
\\
\begin{subfigure}[b]{0.4\columnwidth} \centering
	\includegraphics{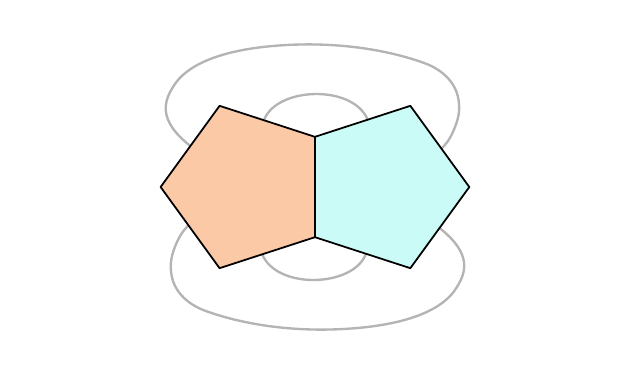}
	\caption{Net of $\P_{2,1}$}
	\label{fig:net21}
\end{subfigure}
~~
\begin{subfigure}[b]{0.4\columnwidth} \centering
	\includegraphics{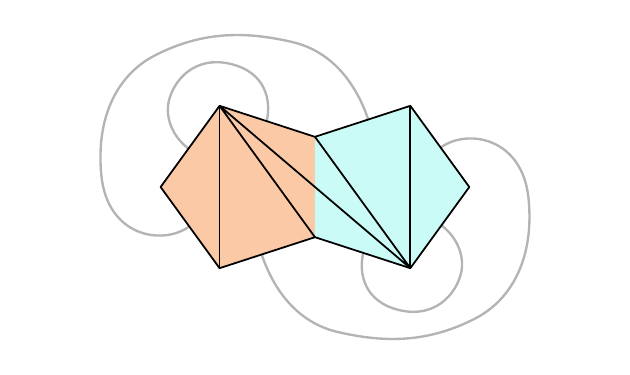}
	\caption{Net of $\P_{2,2}$}
	\label{fig:net22}
\end{subfigure}
	\caption{Polyhedra glued from two regular pentagons and their nets. Here and further black lines are creases along which the polyhedron is folded. Dark red lines always denote borders between the polygons of the gluing.}
\end{figure}

\section{A complete list of all shapes obtained by gluing pentagons}
\label{section:complete}

Below is the list of all polyhedra that can be obtained by gluing regular pentagons. For those polyhedra that are simplicial, their graph structure is confirmed by applying method of Section~\ref{section:algorithmic}, for the others the proof is geometric and is done in Section~\ref{section:geom}.

\begin{itemize}
    \item 2 pentagons:\vspace{-3.5mm}
    \begin{itemize}
        \item doubly-covered regular pentagon, see Figures~\ref{fig:1},~\ref{fig:net21}.
        \item simplicial hexahedron with 5 vertices (3 vertices of degree 4, and 2 vertices of degree 3), see Figures~\ref{fig:2},~\ref{fig:net22}.
    \end{itemize}
\end{itemize} \vspace{-3.5mm}

\begin{figure}[h] \centering
\begin{subfigure}[b]{0.28\columnwidth} \centering
	\input{graphics/4-1}
	\caption{$\P_{4,1}$}
	\label{fig:4-1}
\end{subfigure}
~~
\begin{subfigure}[b]{0.24\columnwidth} \centering
	\input{graphics/4-2}
	\caption{$\P_{4,2}$}
	\label{fig:4-2}
\end{subfigure}
~~
\begin{subfigure}[b]{0.34\columnwidth} \centering
	\input{graphics/4-3}
	\caption{$\P_{4,3}$}
	\label{fig:4-3}
\end{subfigure}
\\
\begin{subfigure}[b]{0.28\columnwidth} \centering
	\includegraphics{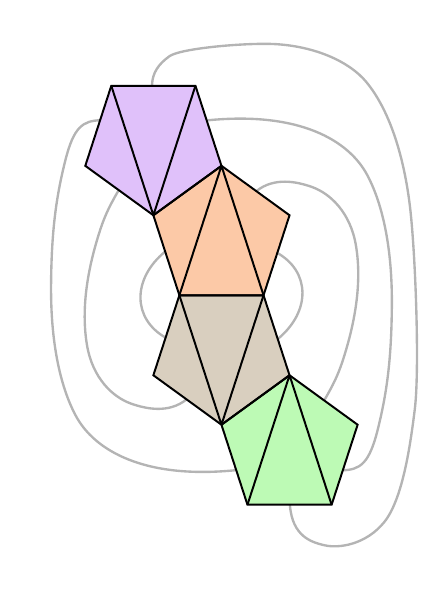}
	\caption{Net of $\P_{4,1}$}
	\label{fig:net41}
\end{subfigure}
~~
\begin{subfigure}[b]{0.28\columnwidth} \centering
	\includegraphics{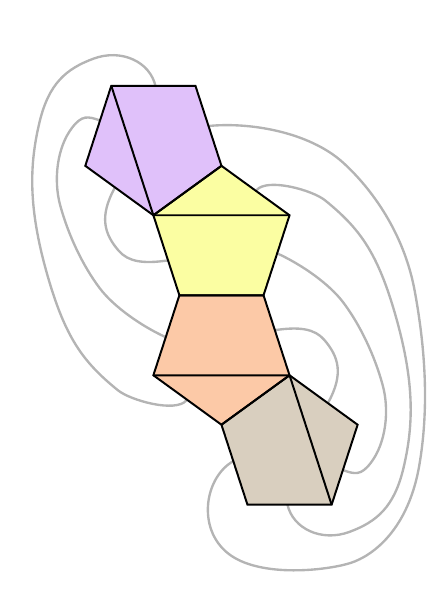}
	\caption{Net of $\P_{4,2}$}
	\label{fig:net42}
\end{subfigure}
~~
\begin{subfigure}[b]{0.28\columnwidth} \centering
	\includegraphics{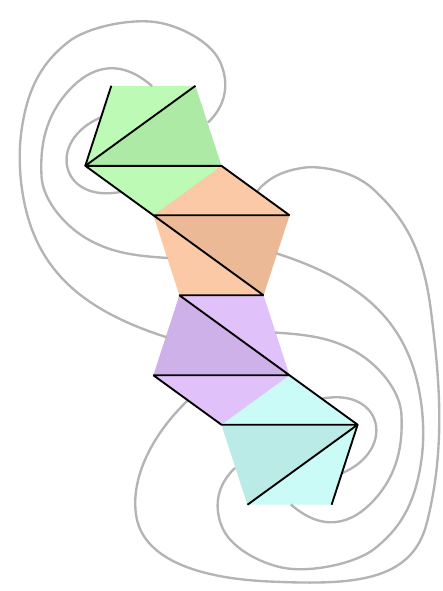}
	\caption{Net of $\P_{4,3}$}
	\label{fig:net43}
\end{subfigure}
	\caption{Polyhedra glued from four regular pentagons and their nets.}
\end{figure}

\begin{itemize}
    \item 4 pentagons:\vspace{-3.5mm}
    \begin{itemize}
        \item simplicial dodecahedron with 8 vertices (2 vertices of degree 5 and 6 vertices of degree 4), see Figures~\ref{fig:4-1},~\ref{fig:net41}.
        \item octohedron with 8 vertices (4 vertices of degree 4 and 4 vertices of degree 3) and 4 quadrilateral and 4 triangular faces.  It is a truncated biprism, see Figures~\ref{fig:4-2},~\ref{fig:net42}.
        \item hexahedron with 8 vertices each of degree 3 and 6 quadrilateral faces. This is a parallelepiped, see Figures~\ref{fig:4-3},~\ref{fig:net43}.
    \end{itemize}
\end{itemize}

Note that $\P_{4,1}$, $\P_{4,2}$, $\P_{4,3}$ can be glued from a single common polygon by altering the relation $\sim$.

\begin{figure} \centering
\begin{subfigure}[b]{0.24\columnwidth}
	\input{graphics/6}
	\caption{$\P_6$}
	\label{fig:6}
\end{subfigure}
~~
\begin{subfigure}[b]{0.26\columnwidth}
	\input{graphics/8}
	\caption{$\P_8$}
	\label{fig:8}
\end{subfigure}
~~
\begin{subfigure}[b]{0.28\columnwidth}
\begin{center}
	\tikz[scale=1]{
	\input{graphics/inscope-12}
	}
\end{center}
	\caption{$\P_{12}$}
	\label{fig:12}
\end{subfigure}
\\ \ \\
\begin{subfigure}[b]{0.45\columnwidth} \centering
	\includegraphics[scale=1.08]{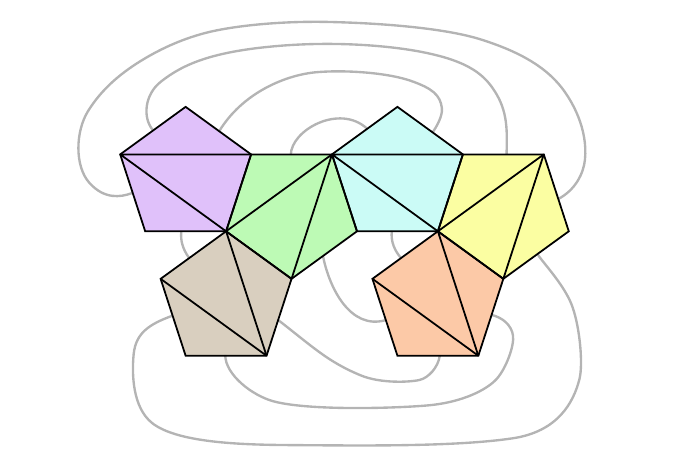}
	\caption{Net of $\P_6$}
	\label{fig:net6}
\end{subfigure}
~~
\begin{subfigure}[b]{0.45\columnwidth} \centering
	\includegraphics[scale=1.08]{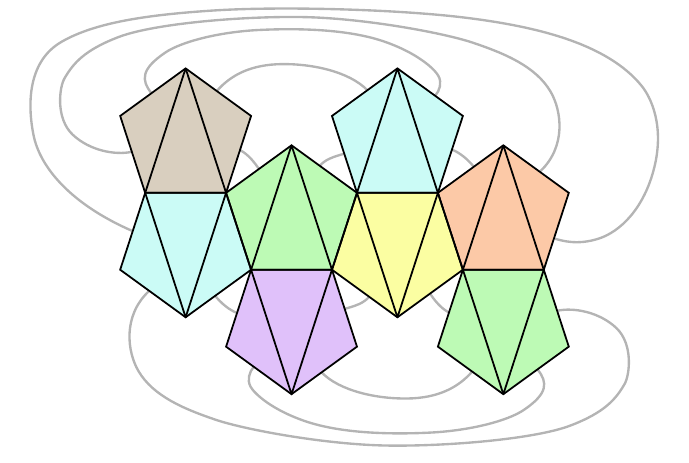}
	\caption{Net of $\P_8$}
	\label{fig:net8}
\end{subfigure}
\\ \ \\
\begin{subfigure}[b]{0.7\columnwidth} \centering
	\includegraphics[scale=0.95]{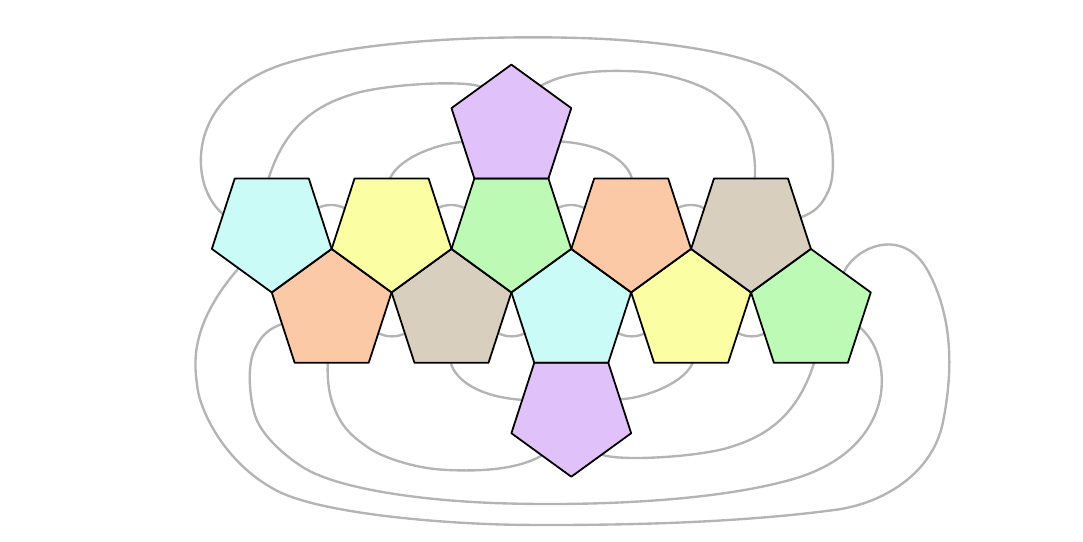}
	\caption{Net of $\P_{12}$}
	\label{fig:net12}
\end{subfigure}
	\caption{Polyhedra glued from six or more regular pentagons and their nets.}
\end{figure}

\begin{itemize}
    \item 6 pentagons: simplicial decaoctohedron (18-hedron) with 11 vertices (5 vertices of degree 6, 6 vertices of degree 4), see Figures~\ref{fig:6},~\ref{fig:net6}.
    \item 8 pentagons: simplicial icositetrahedron (24-hedron) with 14 vertices (2 vertices of degree 6, 12 vertices of degree 5), see Figures~\ref{fig:8},~\ref{fig:net8}.
    \item 12 pentagons: regular dodecahedron with 20 vertices of degree 3 and 12 pentagonal faces, see Figure~\ref{fig:12},~\ref{fig:net12}. 
\end{itemize}

We now proceed with a description of how to determine the graph structures of the polyhedra in this list. We separately confirm the presence of the edges (Section~\ref{section:algorithmic}) and prove that no additional edges are present in the quadrilateral faces of $\P_{4,2}$ and $\P_{4,3}$ (Section~\ref{section:geom}). 

\section{An algorithmic method to verify the graph structure of a glued polyhedron} \label{section:algorithmic}

Consider a polyhedral metric $M$ that satisfies the Alexandrov's conditions and thus corresponds to a unique polyhedron $\P$. Suppose we have a polyhedron $P$ that approximates $\P$. That is, vertices of $P$ are in one-to-one correspondence with the cone points of $M$ (and thus with the vertices of $\P$). In this section we show how to check whether the graph structure of $\P$ contains all the edges of $P$. 

We will be using the following notation: $v_1$, $v_2$, $v_3$,~$\ldots$ for the vertices of $\P$; $u_1$, $u_2$, $u_3$,~$\ldots$ for the corresponding  vertices of $P$; $V$, $E$, $F$ for the number of vertices, edges and faces of $P$ respectively; $\mathcal D$ for the maximum degree of a vertex of $P$; $L$ for the length of the longest edge of $P$; $\ball{u}$ for the ball in $\br^3$ of radius $r$ centered at the point $u$.

We also know the lengths of edges and distances between vertices of $\P$ since those are lengths of shortest paths between cone points of metric $M$. Let the \emph{discrepancy} of an edge $u_iu_j$ of $P$ be the absolute value of the difference between the length of that edge and the distance between the corresponding vertices $v_i$  and $v_j$ of $\P$. Let \emph{maximum edge discrepancy}  $\mu$ of $P$ be the maximum discrepancy for all edges of $P$.

Similarly, for any facial angle $u_ju_iu_k$ of $P$, let \emph{discrepancy} of this angle be the absolute value of the difference between the values of $u_ju_iu_k$ and of the angle between the corresponding shortest paths in $\P$; let the maximum angle discrepancy $\gamma$ of $P$ be the maximum discrepancy for all the facial angles of $P$.

We base our check on the following theorem.

\begin{theorem}
\label{thm:precision}
	Suppose $\mu$ is the maximum edge discrepancy between $P$ and $\P$, $\gamma$ is the maximum angle discrepancy between $P$ and $\P$, $\mathcal D$ is the maximum degree of a vertex of $P$. If $\mathcal D \gamma < \pi / 2$, then each vertex of $\P$ lies within an $r$--ball centered at the corresponding vertex of $P$, where
\begin{equation}
	r = E^2 \cdot  L \cdot 2 \sin ( \mathcal D \gamma / 2 ) + E \mu.
\end{equation} \end{theorem}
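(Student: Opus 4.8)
The plan is to produce a single rigid motion $\phi$ of $\br^3$ and show that $|\phi(v_i)-u_i|\le r$ for every $i$; since any two placements of $\P$ in $\br^3$ differ by a rigid motion, this is exactly the assertion that $\P$, suitably placed, is contained in the union of the $r$-balls around the $u_i$. The only data we may use are the intrinsic edge lengths $|v_iv_j|$ and intrinsic facial angles of $M$ together with their small discrepancies $\mu,\gamma$ from the Euclidean edge lengths and facial angles of $P$; note that the angle information is genuinely needed, because we do not (yet) know the face structure of $\P$ and so cannot simply invoke Cauchy-type rigidity from edge lengths alone. First I would isolate the elementary estimate that produces the shape of the bound: if a point lies at distance $\le L$ from $u$ along a direction deviating by an angle $\le\alpha$ from a prescribed one, and at distance within $\mu$ of a prescribed value, then it lies within $L\cdot 2\sin(\alpha/2)+\mu$ of the prescribed point — the quantity $2\sin(\alpha/2)$ being the chord subtended by an arc of angle $\alpha$ on a circle of radius $L$.

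Next I would set up the propagation. Fix a spanning tree $T$ of the $1$-skeleton of $P$ rooted at $u_1$, and process vertices by depth in $T$. Cut the surface of $\P$ along the union of the shortest paths realizing the distances $|v_iv_j|$ for $u_iu_j\in T$ (together with enough further shortest paths to triangulate), develop the resulting simply-connected piece into the plane, and choose $\phi$ so that at $u_1$ the developed initial edge direction matches that of $P$. For a vertex $u_i$ reached along the branch $u_{j_0}=u_1,\dots,u_{j_d}=u_i$ with $d\le V-1\le E$, the position $u_i-u_1$ in $P$ is the sum of the $d$ edge vectors $u_{j_\ell}-u_{j_{\ell-1}}$, while the developed position of $v_i$ is obtained from the same combinatorial walk using the intrinsic lengths and turning angles. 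The key observation is that the ``turn'' made at an intermediate vertex $u_{j_\ell}$ is a sum of at most $\mathcal D$ consecutive facial angles of $P$ around that vertex, each agreeing with the corresponding intrinsic angle of $\P$ up to $\gamma$, so the developed turn deviates from the $P$-turn by at most $\mathcal D\gamma$; the hypothesis $\mathcal D\gamma<\pi/2$ keeps us in the range where these comparisons are unambiguous and the sine is increasing and concave. Propagating, the running reference direction drifts from the $P$-direction by a controlled amount along the branch, each of the $\le E$ edges contributes a displacement error bounded via the elementary estimate above by roughly $L\cdot 2\sin(\mathcal D\gamma/2)$ (after the frame-drift bookkeeping, which costs one more factor of $E$) plus $\mu$ from its length discrepancy, and a routine summation — using concavity of the sine — collapses these contributions to $E^2\cdot L\cdot 2\sin(\mathcal D\gamma/2)+E\mu=r$. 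Finally one transfers the developed configuration back to $\br^3$: the straight developed segments are the Euclidean chords after applying $\phi$, so the planar bound on $|{\rm dev}(v_i)-u_i|$ carries over to $|\phi(v_i)-u_i|$.

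The hard part will be the bookkeeping for the reference frame: because the $v_i$ sit at cone points of $\P$ and the paths between them are geodesics on a folded surface rather than Euclidean segments, one must be careful about initial tangent directions at cone points, the composition of consecutive turns, and the relation between a geodesic's intrinsic length and the Euclidean distance of its endpoints — which is precisely why I would do all the accounting inside the planar development, where every quantity involved is one of the given intrinsic numbers and the one-edge estimate applies verbatim, and only at the end push forward to $\br^3$. Along the way I would need to dispose of some technicalities that carry no extra error: that the shortest paths realizing the $|v_iv_j|$ may be taken to avoid the remaining vertices (so that angles and the development are well defined), that the paths along a branch of $T$ can be chosen pairwise non-crossing, and that $\phi$ can indeed be chosen to realize the initial alignment at $u_1$. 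Getting the exact constants ($E^2$ rather than, say, $E(E+1)/2$, and the $2\sin(\mathcal D\gamma/2)$ form rather than a looser $\mathcal D\gamma$) will require being slightly careful in the summation step, but nothing beyond elementary inequalities for the sine on $[0,\pi/2]$.
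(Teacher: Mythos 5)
Your proposal follows essentially the same route as the paper: a path of at most $E$ edges in the $1$-skeleton of $P$ from a fixed root vertex, the one-edge estimate $2\ell\sin(\theta/2)+\varepsilon$ (this is exactly the paper's Lemma~\ref{lm:singlesegm}), the bound $\mathcal D\gamma$ on the direction drift at each intermediate vertex obtained by summing at most $\mathcal D$ facial-angle discrepancies, and a summation of $E$ rotations each moving a tail of length at most $EL$, yielding $E^2\cdot L\cdot 2\sin(\mathcal D\gamma/2)+E\mu$. The only real difference is your planar-development scaffolding: the paper instead morphs the $3$-space path $\pi_1$ directly into the path through the corresponding vertices of $\P$ by successive tail rotations and length adjustments, so the transfer-back-to-$\br^3$ step that you defer to the end does not arise as a separate issue (though the paper is comparably informal about identifying the intrinsic lengths and angles of $\P$ with the Euclidean ones of its embedding).
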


We defer its proof to the Section~\ref{section:precis}, and for now we focus on describing our check, using the theorem as a black box.

Let $u_iu_j$ be an edge of $P$ and let $u_a$, $u_b$ be the two vertices of $P$ opposite to the edge $u_iu_j$ (see Figure~\ref{fig:oppPoly}). We want to check that there does not exist a plane intersecting all four $r$--balls centered at $u_i$, $u_j$, $u_a$, $u_b$ respectively.

Assume without loss of generality that the plane passing through $u_a$, $u_i$, $u_j$ is not vertical and that $P$ lies below that plane (otherwise apply a rigid transformation to $P$ so that it becomes true).
Note that we always can do this since $P$ is convex.

Consider three planes $\Pi_1$, $\Pi_2$, $\Pi_3$ tangent to $\ball{u_i}$, $\ball{u_j}$, $\ball{u_a}$ such that:

\begin{itemize}
	\item $\Pi_1$ is below $\ball{u_i}$, $\ball{u_j}$ and above $\ball{u_a}$,	
	\item $\Pi_2$ is below $\ball{u_i}$ and above $\ball{u_j}$, $\ball{u_a}$,	
	\item $\Pi_3$ is below $\ball{u_j}$ and above $\ball{u_i}$, $\ball{u_a}$.
\end{itemize}

\begin{figure}[h] \centering \def\scopescale{0.77}
	\input{figs/tang_plane}
	\caption{Plane $\Pi_1$ tangent to $\ball{u_i}$, $\ball{u_j}$, $\ball{u_a}$.}
	\label{fig:tangPlane}
\end{figure}

\begin{theorem} \label{thm:code-whattocheck}
	If $u_b$ lies below $\Pi_1$, $\Pi_2$ and $\Pi_3$ and the distance from $u_b$ to each of the planes $\Pi_1$, $\Pi_2$ and $\Pi_3$ is greater than $r$, then there must be the edge $v_iv_j$ in $\P$.
\end{theorem}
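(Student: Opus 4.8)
The plan is to argue by contradiction: suppose the edge $v_iv_j$ is \emph{not} present in $\P$. Since $\P$ is a convex polyhedron and $v_i,v_j$ are both vertices of it, the segment $v_iv_j$ either lies strictly inside $\P$ or lies on the surface but crosses through the interiors of faces/edges other than a single edge $v_iv_j$. In either case, because $u_a$ and $u_b$ are the two vertices of $P$ opposite to $u_iu_j$, and we want to reach a contradiction with convexity, the natural route is: the two triangular faces $u_au_iu_j$ and $u_bu_iu_j$ of $P$ (recall this check is applied to simplicial $P$) should be faces of $\P$ if and only if $v_iv_j$ is an edge of $\P$; so if $v_iv_j$ is not an edge then $v_a$, $v_b$ must lie on the \emph{same} side of (or, more precisely, fail to be separated by) the plane through $v_i,v_j$ and one of them. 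I would make this precise by showing that $v_iv_j$ being an edge of $\P$ is equivalent to the existence of a supporting plane of $\P$ containing $v_i$ and $v_j$ but strictly separating $v_a$ from $v_b$ — equivalently, to the \emph{non}-existence of a plane meeting all four balls $\ball{v_i}$, $\ball{v_j}$, $\ball{v_a}$, $\ball{v_b}$ (here $r$ is the radius from Theorem~\ref{thm:precision}, so each $v_\bullet \in \ball{u_\bullet}$).

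First I would invoke Theorem~\ref{thm:precision}: under the stated hypothesis $\mathcal D\gamma < \pi/2$ it gives $v_i \in \ball{u_i}$, $v_j \in \ball{u_j}$, $v_a \in \ball{u_a}$, $v_b \in \ball{u_b}$. Then I reduce the claim ``$v_iv_j$ is an edge of $\P$'' to the claim ``no plane simultaneously intersects $\ball{u_i}$, $\ball{u_j}$, $\ball{u_a}$, $\ball{u_b}$.'' The forward direction of this reduction is the key geometric lemma: if every plane through $v_i$ and $v_j$ has $v_a$ and $v_b$ strictly on the same open side, then one can tilt a supporting plane of $\P$ at the edge region so that it supports $\P$ along the segment $v_iv_j$, forcing $v_iv_j$ to be an edge (this is essentially the statement that an edge of a convex polyhedron is characterized by the set of supporting planes through it forming a nondegenerate arc, separating the two opposite vertices). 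Conversely, if $v_iv_j$ is not an edge, a short convexity argument produces a plane through $v_i,v_j$ that does not strictly separate $v_a$ from $v_b$, hence a plane meeting all four balls.

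It then remains to show that the concrete, checkable condition in the theorem statement — $u_b$ lies below $\Pi_1,\Pi_2,\Pi_3$, and $\dist(u_b,\Pi_k) > r$ for each $k$ — implies that no plane meets all four balls $\ball{u_i}$, $\ball{u_j}$, $\ball{u_a}$, $\ball{u_b}$. Here the planes $\Pi_1,\Pi_2,\Pi_3$ are the three ``extreme'' common tangent planes to $\ball{u_i},\ball{u_j},\ball{u_a}$ that separate $u_a$ from $\{u_i,u_j\}$ in the three combinatorially distinct ways; any plane that meets all three of $\ball{u_i},\ball{u_j},\ball{u_a}$ either separates them in one of those ways (or not at all), and in the ``not at all'' case it lies in the slab bounded by configurations already covered. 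The point is that the set of planes meeting all of $\ball{u_i},\ball{u_j},\ball{u_a}$ is sandwiched, in the relevant half-space, between $\Pi_1,\Pi_2,\Pi_3$ and the plane through $u_i,u_j,u_a$; since $u_b$ sits strictly below all of $\Pi_1,\Pi_2,\Pi_3$ at distance $> r$, and $\ball{u_b}$ has radius $r$, the ball $\ball{u_b}$ lies entirely below each $\Pi_k$, so it cannot be touched by any plane that must stay on or above at least one of the $\Pi_k$ to touch $\ball{u_a}$. I would formalize the ``sandwiching'' by a case analysis on which of $u_i,u_j,u_a$ a candidate separating plane puts on which side, reducing in each case to comparison with one of $\Pi_1,\Pi_2,\Pi_3$ via the fact that among all tangent planes to three fixed balls realizing a fixed side-pattern, the $\Pi_k$ is extreme in the vertical direction chosen at the start (recall $P$ lies below the plane through $u_a,u_i,u_j$).

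The main obstacle I anticipate is the last step: carefully justifying that the three planes $\Pi_1,\Pi_2,\Pi_3$ genuinely ``dominate'' every plane meeting $\ball{u_i},\ball{u_j},\ball{u_a}$ from the side where $u_b$ lives, and handling the degenerate configurations (e.g.\ a candidate plane tangent to only some of the balls, or separating none of the three centers, or the case where $u_a,u_i,u_j$ are nearly collinear so two of the $\Pi_k$ nearly coincide). This requires a clean argument that the family of admissible separating planes is connected and bounded by the $\Pi_k$, rather than a brute-force enumeration; the convexity of $P$ and the fixed choice of ``below'' direction are what make it go through.
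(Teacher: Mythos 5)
There is a genuine gap, and it sits exactly where you say you anticipate an obstacle: the entire content of the theorem is the claim that the three tangent planes $\Pi_1$, $\Pi_2$, $\Pi_3$ dominate, in the chosen vertical direction, \emph{every} plane meeting $\ball{u_i}$, $\ball{u_j}$, $\ball{u_a}$ on the side where $u_b$ lives, and you leave this as an acknowledged difficulty with only a vague sketch. The sketch itself takes the wrong parameter: you propose a case analysis on the side-pattern that a candidate plane induces on the three centers, together with an unproved connectedness/extremality assertion for the family of admissible planes, and you correctly note that degenerate configurations threaten this. The paper parametrizes differently and thereby avoids all of these issues: place $u_i,u_j$ on the $y$-axis, and for each fixed $(x,y)$ ask for the minimum $z$ such that some plane passes through the three balls and the point $(x,y,z)$. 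Splitting into the three ranges $y_{u_i}\le y\le y_{u_j}$, $y\le y_{u_i}$, $y_{u_j}\le y$, one projects in each case along a direction that collapses two of the three ball constraints into one (e.g.\ onto the plane $y=0$, where $\ball{u_i}$ and $\ball{u_j}$ have the same projection), reducing to the two-disk statement of Lemma~\ref{lm:ballsPlaneCase} that the common tangent is the lowest line through two disks beyond the second disk. This computes the lower envelope exactly as $\Pi_1,\Pi_2,\Pi_3$ and is the step your proposal is missing.

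A secondary problem is your proposed reduction of ``$v_iv_j$ is an edge of $\P$'' to ``no plane meets all four balls,'' stated as an equivalence. It fails in both directions: a plane can meet all four $r$--balls even when the edge is present (this is precisely the paper's ``inconclusive'' outcome, cured by improving the approximation), and the non-existence of such a plane says nothing by itself about the remaining vertices of $\P$, so it cannot characterize edges of the full convex hull. What the check actually delivers is local convexity at the edge: by Theorem~\ref{thm:precision} each $v_\bullet$ lies in $\ball{u_\bullet}$; the plane through $v_a,v_i,v_j$ meets $\ball{u_i}$, $\ball{u_j}$, $\ball{u_a}$, so under the hypothesis of the theorem the whole ball $\ball{u_b}$, and in particular $v_b$, lies strictly below it, forcing a strictly convex dihedral angle of $\P$ along $v_iv_j$. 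Your instinct to invoke Theorem~\ref{thm:precision} first is right and matches the paper; the supporting-plane machinery you build around it is both unnecessary and, as an equivalence, incorrect.
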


An example can be seen on Figure~\ref{fig:tangPlane}: plane $\Pi_1$ is tangent to $\ball{u_i}$, $\ball{u_j}$, $\ball{u_a}$. Point $u_{b,1}$ is below $\Pi_1$, and point $u_{b,2}$ is above $\Pi_1$, the distance from each of the points to $\Pi_1$ is greater than $r$.

To prove this theorem, we need the following lemma.

\begin{lemma} \label{lm:ballsPlaneCase}
	Given two disks $\ball\ule$, $\ball\uri$ in $\br^2$; points $\ule$, $\uri$ lie on $x$ axis. Given a point $u$, $x_u > x_{\uri}$, $y_u < 0$.
	If $u$ lies below the common tangent of the disks that is above $\ball\ule$ and below $\ball\uri$, than
	there is no line passing through $\ball\ule$, $\ball\uri$, and $u$.
\end{lemma}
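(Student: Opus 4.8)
The plan is to reduce the statement to a one‑variable inequality. Place the line of centers on the $x$-axis with the midpoint $m$ of $\ule\uri$ at the origin, so $\ule=(-d,0)$ and $\uri=(d,0)$ for some $d>0$, with $\ule$ to the left. The existence of a common tangent having $\ball\ule$ on one side and $\ball\uri$ on the other forces $d>r$; writing $s:=\sqrt{d^{2}-r^{2}}$ and $\theta_{0}:=\arcsin(r/d)\in(0,\pi/2)$ (so that $\sin\theta_{0}=r/d$, $\cos\theta_{0}=s/d$ and $\tan\theta_{0}=r/s$), a short computation identifies the tangent $T$ of the statement as the internal common tangent $rx+sy=0$ through $m$, of slope $-r/s$; a point is \emph{below} $T$ exactly when $rx+sy<0$. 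Hence the hypotheses read $rx_{u}+sy_{u}<0$, $x_{u}>d$, $y_{u}<0$, and I would first extract from them the two facts used below: $|y_{u}|>rd/s>r$, and $|y_{u}|/x_{u}>r/s=\tan\theta_{0}$ (i.e.\ the segment $um$ is steeper than $T$).

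The next step is to pin down what every line $\ell$ meeting \emph{both} disks must look like. Let $\hat n$ be a unit normal of $\ell$ and $\theta$ the angle $\ell$ makes with the $x$-axis, so $|\hat n_{x}|=|\sin\theta|$. Writing the signed perpendicular distances of $\ule$, $\uri$, $m$ to $\ell$ in terms of $\hat n$ and applying the triangle inequality to $\mathrm{dist}(\ule,\ell)\le r$ and $\mathrm{dist}(\uri,\ell)\le r$ gives (i) $\mathrm{dist}(m,\ell)\le r-d|\sin\theta|$ and (ii) $d|\sin\theta|\le r$, i.e.\ $|\theta|\le\theta_{0}$. Thus it suffices to prove that no line through $u$ with $|\theta|\le\theta_{0}$ comes within distance $r-d|\sin\theta|$ of $m$, since any line through $u$ meeting both disks would violate this by (i).

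For the last step I would parametrise a line $\ell$ through $u$ by $\theta$ and compute $\mathrm{dist}(m,\ell)=\bigl|\,x_{u}\sin\theta+|y_{u}|\cos\theta\,\bigr|=:|g(\theta)|$. Because $|y_{u}|/x_{u}>\tan\theta_{0}\ge|\tan\theta|$ for $|\theta|\le\theta_{0}$, we have $g(\theta)>0$ there, so $\mathrm{dist}(m,\ell)=g(\theta)$ and it remains to check $g(\theta)>r-d|\sin\theta|$ on $[-\theta_{0},\theta_{0}]$. For $\theta\ge0$ this reads $(x_{u}+d)\sin\theta+|y_{u}|\cos\theta>r$; the left‑hand side is concave in $\theta$, hence minimised at an endpoint of $[0,\theta_{0}]$, and both endpoints are immediate — it equals $|y_{u}|>r$ at $\theta=0$ and $\tfrac1d\bigl(r(x_{u}+d)+s|y_{u}|\bigr)>r$ at $\theta=\theta_{0}$. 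For $\theta\le0$ it reads $|y_{u}|\cos\theta-(x_{u}-d)|\sin\theta|>r$; substituting $|y_{u}|>rx_{u}/s$ reduces it to showing $\Phi(t):=\tfrac{x_{u}\cos t}{s}-\tfrac{(x_{u}-d)\sin t}{r}\ge1$ for $t=|\theta|\in[0,\theta_{0}]$, which holds because $\Phi$ is strictly decreasing (this is where $x_{u}-d>0$, hence the hypothesis $x_{u}>x_{\uri}$, is used) and $\Phi(\theta_{0})=\tfrac{x_{u}}{d}-\tfrac{x_{u}-d}{d}=1$. I expect this last monotone‑to‑the‑endpoint estimate to be the only real obstacle, and to be genuinely tight: if $u$ lay \emph{on} $T$ then $\Phi(\theta_{0})=1$ with equality and $T$ itself would be a line through $u$ meeting both disks, so no slack may be spent carelessly. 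The remaining ingredients — choosing the coordinates, identifying $T$, and the two triangle‑inequality reductions — are routine.
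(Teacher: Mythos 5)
Your proof is correct, but it takes a genuinely different route from the paper. The paper argues variationally: it considers the region swept by all lines meeting both disks and shows its lower boundary (for $x>x_{\uri}$) is the internal common tangent, by observing that any transversal line not tangent to $\ball\ule$ from above or to $\ball\uri$ from below can be perturbed downward; hence every transversal lies above the tangent where $x>x_{\uri}$, and $u$ lies strictly below it. Your argument instead fixes coordinates, characterizes transversals by two necessary inequalities ($d|\sin\theta|\le r$ and $\mathrm{dist}(m,\ell)\le r-d|\sin\theta|$, both correctly derived from the two distance constraints), and then verifies by a one-variable concavity/monotonicity computation that every line through $u$ with $|\theta|\le\theta_0$ violates the second inequality; I checked the endpoint evaluations ($|y_u|>r$ at $\theta=0$, the trivial inequality $rx_u+s|y_u|>0$ at $\theta=\theta_0$, and $\Phi(\theta_0)=1$ with $\Phi$ decreasing) and they all go through, with the tightness at $\theta=-\theta_0$ correctly accounted for by the strictness of $|y_u|>rx_u/s$. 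What the paper's approach buys is brevity and a clear geometric picture (the tangent as a lower envelope), at the cost of leaving the ``lowering'' perturbation and the limiting/compactness step informal; your approach is longer but fully quantitative and checkable line by line, and it makes explicit where each hypothesis ($x_u>x_{\uri}$, $y_u<0$, $u$ strictly below the tangent) enters. One small presentational point: you assume both disks have the same radius $r$, which is indeed the intended setting (the $\ball{\cdot}$ notation fixes a single radius $r$ from Theorem~\ref{thm:precision}), but it is worth stating, since your formula $rx+sy=0$ for the tangent and the identity $\Phi(\theta_0)=1$ both rely on it.
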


The example for this lemma can be seen in Figure~\ref{fig:ballsPlaneCaseEx}. Point $u_1$ is above the tangent, so there may be a line passing through it and the two disks. Point $u_2$ is below the tangent, so no lines through $\ball\ule$, $\ball\uri$, $u$ are possible. \begin{figure}
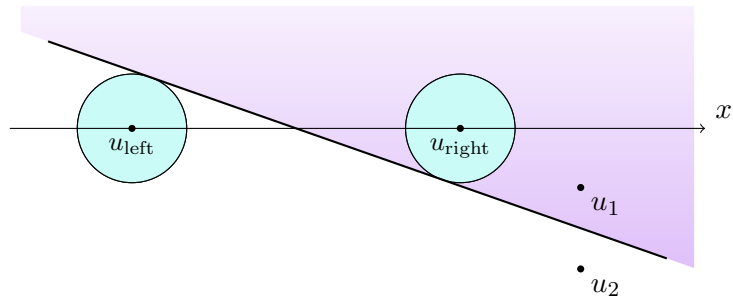
 \centering
\tikz[scale=0.36]{
	\draw[white] (- 2.1 * 5.3333 cm - 1.5 cm,0) -- (0,0);
	\shade[opacity=0.55, bottom color=fill5, top color=fill5!24!white]
		(-1.7 * 5.3333 cm - 1 cm, 1.7 * 1.8856 cm + 0.3536 cm)
		-- (2.54 * 5.3333 cm + 1cm, -2.54 * 1.8856 cm - 0.3536 cm)
		-- (2.54 * 5.3333 cm + 1cm, 4.5)
		-- (-1.7 * 5.3333 cm - 1 cm, 4.5) -- cycle;
     \foreach \r in {-6,6} {
	\filldraw[fill=white] (\r,0) circle[radius=2cm];
	\filldraw[fill=fill4,fill opacity=0.55] (\r,0) circle[radius=2cm];
	\fill (\r,0) circle[radius=1.3mm];
     }
	\draw[thick] (-1.7 * 5.3333 cm, 1.7 * 1.8856 cm) --
		(2.54 * 5.3333 cm, -2.54 * 1.8856 cm);
	\draw[->] (-1.7 * 5.3333 cm - 1.4 cm,0) -- (2.54 * 5.3333 cm + 1.4 cm,0)
		node[anchor=south west]{$x$};
     \fill (1.95 * 5.3333 cm, -1.95 * 1.8856 cm)
	++(0,1.5) circle[radius=1.3mm]
		node[anchor=north west]{$u_1$}
	++(0,-3) circle[radius=1.3mm]
		node[anchor=north west]{$u_2$};
     \draw (-6,0) node[below]{\small $\ule$}
	(6,0) node[below]{\small $\uri$};
}
	\caption{An example for Lemma~\ref{lm:ballsPlaneCase}.}
	\label{fig:ballsPlaneCaseEx}
\end{figure}

\begin{proof}
Consider the set of points in $\br^2$ covered by all lines passing through $\ball\ule$, $\ball\uri$. We are looking for the lower border of it which corresponds to the lowest line passing through these disks. \begin{figure}
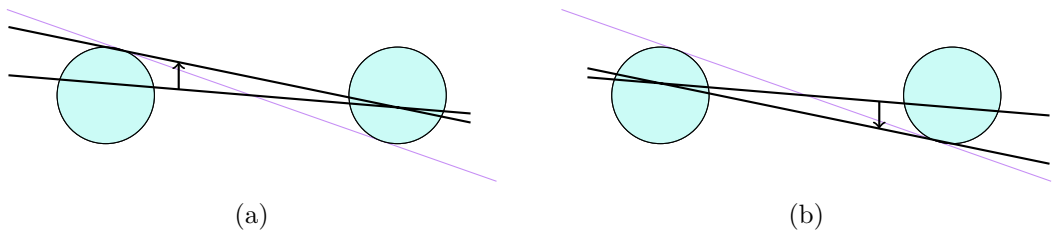
 \centering
\begin{subfigure}[t]{0.44\textwidth} \centering
\tikz[scale=0.32,rotate=180]{
	\draw[draw=fill5]
		(-1.7 * 5.3333 cm - 1 cm, 1.7 * 1.8856 cm + 0.3536 cm)
		-- (1.7 * 5.3333 cm + 1cm, -1.7 * 1.8856 cm - 0.3536 cm);
     \foreach \r in {-6,6} {
	\filldraw[fill=white] (\r,0) circle[radius=2cm];
	\filldraw[fill=fill4,fill opacity=0.55] (\r,0) circle[radius=2cm];
     }
	\draw[thick] (-6,0.5) -- ++(-3, 0.25) (-6,0.5) -- ++(12,-1) -- ++(4,-0.3333)
		(-6,0.5) -- ++(-3,0.25*2.5 cm) (-6,0.5) -- ++(12,-2.5) -- ++(4,-0.3333*2.5 cm);
	\draw[thick,->] (-6,0.5) ++(0.75*12,-0.75) -- ++(0,-0.75*1.5 cm);
} \caption{ }
\label{fig:tangRaiseA}
\end{subfigure}\hspace{0.5cm}
\begin{subfigure}[t]{0.44\textwidth} \centering
\tikz[scale=0.32]{
	\draw[draw=fill5]
		(-1.7 * 5.3333 cm - 1 cm, 1.7 * 1.8856 cm + 0.3536 cm)
		-- (1.7 * 5.3333 cm + 1cm, -1.7 * 1.8856 cm - 0.3536 cm);
     \foreach \r in {-6,6} {
	\filldraw[fill=white] (\r,0) circle[radius=2cm];
	\filldraw[fill=fill4,fill opacity=0.55] (\r,0) circle[radius=2cm];
     }
	\draw[thick] (-6,0.5) -- ++(-3, 0.25) (-6,0.5) -- ++(12,-1) -- ++(4,-0.3333)
		(-6,0.5) -- ++(-3,0.25*2.5 cm) (-6,0.5) -- ++(12,-2.5) -- ++(4,-0.3333*2.5 cm);
	\draw[thick,->] (-6,0.5) ++(0.75*12,-0.75) -- ++(0,-0.75*1.5 cm);
}  \caption{ }
\label{fig:tangRaiseB}
\end{subfigure}
	\caption{Common tangent of disks is lower than any line passing through them.}
\end{figure}

Consider a line passing through the disks. If it is not tangent to $\ball \ule$ from above, it can be made lower by raising its intersection with $\ball\ule$, see Figure~\ref{fig:tangRaiseA}. If it is not tangent to $\ball\uri$ from below, it also can be made lower by lowering its intersection with $\ball\uri$, see Figure~\ref{fig:tangRaiseB}.

Therefore, any line passing through $\ball\ule$, $\ball\uri$ is higher than the common tangent of these disks when $x > x_{\uri}$.
\end{proof}

\begin{proof}[Proof of Theorem~\ref{thm:code-whattocheck}]
We can assume that points $u_i$, $u_j$ lie on $y$ axis, see Figure~\ref{fig:tangPlane}. For each pair $(x,y)$ we want to find minimum $z$ such that there is a plane passing through $\ball{u_i}$, $\ball{u_j}$, $\ball{u_a}$, and $(x,y,z)$. Let us consider three cases: (1) $y_{u_i} \le y \le y_{u_j}$, (2) $y \le y_{u_i}$, (3) $y_{u_j} \le y$.

Consider case 1. Project everything on plane $y=0$. The projections of $\ball{u_i}$ and $\ball{u_j}$ coincide, and a plane $\l$ passing through these disks can be lowered by matching the projections of its intersections with the disks, thus making projection of $\l$ a line. Now we can apply Lemma~\ref{lm:ballsPlaneCase} to the projection to get plane $\Pi_1$ from the statement of the Theorem.

Consider case 2. Project everything on a plane orthogonal to the segment $u_ju_a$. Using similar argument, applying Lemma~\ref{lm:ballsPlaneCase} we get plane $\Pi_2$ from the statement. Case 3 is symmetric to case 2 and gives us plane $\Pi_3$.

Therefore, all points of $\ball{u_b}$ should lie below the planes $\Pi_1$, $\Pi_2$, $\Pi_3$, which yields the condition of distance between $u_b$ and the planes being greater than $r$.
\end{proof}

The check suggested in Theorem~\ref{thm:code-whattocheck} requires $O(1)$ time, and has to be performed once for every edge $u_iu_j$ of $P$. This implies the following.

\begin{theorem} \label{thm:checkGraphStructure}
	Given a polyhedral metric $M$ satisfying Alexandrov's conditions and an approximation $P$ for the polyhedron $\P$ that corresponds to $M$, there is a procedure to verify for each edge of $P$ if it is present in $\P$. The procedure answers ``yes'' only for those edges that are present in $\P$, and it answers ``inconclusive'' if the approximation $P$ is not precise enough. The procedure requires time $\mathcal O(E)$. 
\end{theorem}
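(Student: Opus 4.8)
The plan is to turn the two previous theorems into an explicit algorithm, organized into a preprocessing phase and a per-edge phase.

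First I would compute, from $M$ and $P$, all the quantities needed to evaluate the radius of Theorem~\ref{thm:precision}: the number of edges $E$, the length $L$ of the longest edge of $P$, the maximum vertex degree $\mathcal D$ of $P$, the maximum edge discrepancy $\mu$ (comparing each edge length of $P$ with the corresponding shortest-path distance between cone points of $M$), and the maximum facial-angle discrepancy $\gamma$ (comparing each facial angle of $P$ with the angle between the corresponding shortest paths in $M$). Since $P$ has $O(E)$ edges, $O(E)$ facial angles (their number is $\sum_f |f| = 2E$), and $O(E)$ vertices, each of these is obtained in $O(E)$ time, and so is $r = E^2 \cdot L \cdot 2 \sin(\mathcal D \gamma / 2) + E\mu$. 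If $\mathcal D \gamma \ge \pi/2$, the hypothesis of Theorem~\ref{thm:precision} fails and the procedure simply outputs ``inconclusive'' for every edge; otherwise Theorem~\ref{thm:precision} guarantees that each vertex $v_i$ of $\P$ lies inside $\ball{u_i}$.

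Next, for every edge $u_iu_j$ of $P$ I would run the test of Theorem~\ref{thm:code-whattocheck}. Using an incidence structure of $P$ (e.g.\ a doubly-connected edge list) one reads off in $O(1)$ time the two vertices $u_a$, $u_b$ opposite to $u_iu_j$ (when $P$ is simplicial, as in our application, the two faces at $u_iu_j$ are triangles, so there are exactly two such vertices); one then applies a rigid motion, which exists since $P$ is convex, making the plane through $u_a, u_i, u_j$ non-vertical with $P$ below it; constructs the three common tangent planes $\Pi_1, \Pi_2, \Pi_3$ to $\ball{u_i}, \ball{u_j}, \ball{u_a}$ with the prescribed above/below pattern (each is the solution of a constant-size system, hence produced in $O(1)$); and checks whether $u_b$ lies strictly below all three planes and at distance greater than $r$ from each. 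If so, the procedure answers ``yes'', and otherwise ``inconclusive''. Correctness is then immediate: a ``yes'' for $u_iu_j$ is issued only when the hypothesis of Theorem~\ref{thm:code-whattocheck} holds, so that theorem certifies that $v_iv_j$ is an edge of $\P$; since no answer ``no'' is ever produced, every non-``inconclusive'' answer is sound. The running time is $O(E)$ for the preprocessing plus $O(1)$ per edge over $E$ edges, i.e.\ $O(E)$ in total.

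I do not anticipate a genuine obstacle, since Theorems~\ref{thm:precision} and~\ref{thm:code-whattocheck} carry all the geometric weight. The only points needing care are the bookkeeping (ensuring that every ingredient of $r$ is available and computed within the $O(E)$ budget, and that $\gamma$ is small enough to invoke Theorem~\ref{thm:precision} at all) and the verification that the tangent planes $\Pi_1, \Pi_2, \Pi_3$ with the required orientation exist and are $O(1)$-computable — which is exactly the separation statement packaged in Lemma~\ref{lm:ballsPlaneCase}, already used inside the proof of Theorem~\ref{thm:code-whattocheck}.
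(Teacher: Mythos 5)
Your proposal is correct and follows essentially the same route as the paper: the paper derives this theorem in one line by observing that the $O(1)$ check of Theorem~\ref{thm:code-whattocheck} is run once per edge, with soundness coming from the fact that the check certifies an edge only when the separating-plane condition holds. Your additional bookkeeping (computing $r$ in a preprocessing pass, handling $\mathcal D\gamma \ge \pi/2$ by answering ``inconclusive'') is a reasonable elaboration of details the paper leaves implicit.
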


Inconclusive answers occur if a plane exists that intersects all four $r$--balls even though there is an edge connecting two of the vertices. In such case, precision has to be increased by replacing $P$ with a polyhedron that has smaller discrepancy in edge lengths and values of angles and repeating the procedure.

Theorem~\ref{thm:checkGraphStructure} yields that if $\P$ is simplicial we can in time $\mathcal O(E)$ verify whole its graph structure without any additional effort. However, if there are faces in $\P$ with four or more vertices, the absence of the edges that are diagonals of these faces has to be proved, which requires some creativity. For non-simplicial shapes glued from pentagons such proofs are given in Section~\ref{section:geom}.

To obtain polyhedron $P$ one can use the algorithm developed by Kane et al.~\cite{kpd09-approx} or the one by Bobenko, Izmestiev~\cite{boben}. Each of them outputs a polyhedron $P$ which is an approximation of $\P$. In this work we used the implementation of the latter presented by Sechelmann~\cite{sech}. It gave us approximation with $\mu \sim 10^{-7}$, $\gamma \sim 10^{-6}$, $L \sim 2.5$. These parameters allowed for $r \sim 10^{-3}$, which was enough to verify the presence of all the suggested edges.

To do so, we developed a program that checks the condition of Theorem~\ref{thm:code-whattocheck}. Its source code can be found in our bitbucket
repository\footnote{\url{bitbucket.org/boris-a-zolotov/diplomnaia-rabota-19/src/master/praxis/haskell}}.
 
\subsection{Proof of Theorem~\ref{thm:precision}} \label{section:precis}

We now proceed with the proof of Theorem~\ref{thm:precision}. To prove it, we need the following lemma.

\begin{lemma}
\label{lm:singlesegm}
Let $pq$, $pq'$ be line segments in $\br^3$, $|pq| = \ell$.
If there are two real numbers $\varepsilon$, $\theta$ with $\varepsilon > 0$
and $0 < \theta < \tfrac{\pi}{2}$ such that
	\[ \ell-\varepsilon \le |pq'| \le \ell+\varepsilon
	\text{\quad and\quad}
	\magl qpq' \le \theta, \]
then\quad $|qq'| \le 2 \ell \sin\frac{\theta}{2} + \varepsilon.
	\quad\refstepcounter{equation}\hfill(\theequation)\label{eq:singlesegm}$
\end{lemma}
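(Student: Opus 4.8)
The statement to prove is Lemma~\ref{lm:singlesegm}: given segments $pq$, $pq'$ with $|pq| = \ell$, $\ell - \varepsilon \le |pq'| \le \ell + \varepsilon$, and angle $\magl qpq' \le \theta < \pi/2$, we want $|qq'| \le 2\ell \sin\frac{\theta}{2} + \varepsilon$.

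The plan is to reduce to a planar picture and use the triangle inequality to separate the two sources of error (the angular spread $\theta$ and the length discrepancy $\varepsilon$). First I would place $p$ at the origin and work in the plane spanned by $pq$ and $pq'$ (if $q' $ is on the ray $pq$ the claim is immediate, so assume this plane is well-defined). Let $q''$ be the point on the ray $pq'$ at distance exactly $\ell$ from $p$; then $p$, $q$, $q''$ lie on a circle of radius $\ell$ centered at $p$, and the arc between them subtends a central angle $\magl qpq'' = \magl qpq' \le \theta$. By the standard chord-length formula, $|qq''| = 2\ell \sin\bigl(\tfrac{1}{2}\magl qpq''\bigr) \le 2\ell \sin\tfrac{\theta}{2}$, using that $\sin$ is increasing on $[0,\pi/2]$ and $\tfrac{1}{2}\magl qpq'' \le \tfrac{\theta}{2} < \tfrac{\pi}{4}$. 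Next, $q''$ and $q'$ both lie on the ray from $p$ through $q'$, so $|q''q'| = \bigl| |pq'| - |pq''| \bigr| = \bigl| |pq'| - \ell \bigr| \le \varepsilon$ by hypothesis. Finally, the triangle inequality gives $|qq'| \le |qq''| + |q''q'| \le 2\ell\sin\tfrac{\theta}{2} + \varepsilon$, which is exactly \eqref{eq:singlesegm}.

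The only subtlety — and it is minor — is the degenerate case where $q'$ lies on the line through $p$ and $q$ (so no two-dimensional plane is determined); there one checks directly that $|qq'| = \bigl||pq'|-\ell\bigr| \le \varepsilon \le 2\ell\sin\tfrac\theta2 + \varepsilon$. I do not anticipate a real obstacle here: the argument is a clean two-step triangle-inequality estimate, with the chord-length identity $2\ell\sin(\alpha/2)$ for a chord of a radius-$\ell$ circle subtending central angle $\alpha$ doing all the geometric work. The hypothesis $\theta < \pi/2$ is used only to keep $\sin$ monotone so that the bound $2\ell\sin\frac\theta2$ is valid; it is not otherwise essential to the inequality.
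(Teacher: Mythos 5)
Your proof is correct and is essentially the paper's argument: your intermediate point $q''$ is exactly the rotated point $\rho(q)$ in the paper's proof, and both arguments bound $|qq''|$ by the chord length $2\ell\sin\tfrac{\theta}{2}$ of an isosceles triangle with legs $\ell$ and apex angle at most $\theta$, bound $|q''q'|$ by $\varepsilon$ along the ray $pq'$, and conclude by the triangle inequality. Your explicit handling of the collinear degenerate case is a small tidiness bonus not spelled out in the paper, but the substance is identical.
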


\begin{proof}

	$pq'$ can be obtained from $pq$, as shown in Figure~\ref{fig:edgeOffset}, by a composition $\rho \circ \tau$ of
	
\begin{itemize}
	\item[(1)] rotation $\rho$ around $p$ by an angle at most $\theta$,
	\item[(2)] homothety $\tau$ with center $p$ and ratio $\lambda$, where $\lambda$ is some real number with 
	$\frac{\ell-\varepsilon}{\ell} \le \lambda \le \frac{\ell+\varepsilon}{\ell}$.
\end{itemize}

\begin{figure}[h] \centering
	\tikz[scale=1.38]{ \def\ptmark#1{\fill (#1) circle[radius=0.48mm];}
	\draw (0,0) node[below]{$p$}
	    -- (0:4.2) node[below]{$q$}
	    -- (16:4.2) node[above]{$\rho(q)$};
	\draw (0:2.1) node[below]{$\ell$}; \draw (16:2.1) node[above]{$\ell$};
	\draw[color=gray] (0,0) -- (16:4.2);
	\draw[->] (1.5,0) arc (0:7:1.5) node[right]{$\le\theta$} arc(7:16:1.5);
	\draw (16:4.92) node[above]{$q'$};
    \begin{scope}[rotate=16]
	\draw[<->] (3.2,0) -- (5.2,0);
	\foreach \x in {3.2,5.2} {\draw (\x, -0.11) -- (\x, 0.11);}
	\foreach \x in {3.75,4.75} {\draw (\x, 0) node[below]{\footnotesize $\varepsilon$};}
    \end{scope}
	\draw[color=highlight,->,thick] (4.2,0)
	    arc (0:8:4.2) node[right]{$\rho$} arc (8:16:4.2);
	\draw[color=highlight,->,thick] (16:4.2) -- (16:4.92);		  
	\ptmark{0,0} \ptmark{4.2,0}
	\ptmark{16:4.2} \ptmark{16:4.92}
}
	\caption{After a segment is rotated by at most $\theta$ and its length changed by at most $\varepsilon$, its endpoint $q$ moves by at most $\ell \cdot 2\sin\tfrac{\theta}{2} + \varepsilon$.}
	\label{fig:edgeOffset}
\end{figure}
	
First, it is clear that $\left| \rho(q),\ \tau(\rho(q)) \right|\le\varepsilon$, since $\tau$ is defined so as to add not more than $\varepsilon$ to a segment of length $\ell$. Now we estimate $\dist (q, \rho(q))$. It is at most $\ell \cdot 2\sin ( \theta / 2)$, which is the length of the base of an isosceles triangle with sides equal to $\ell$ and angle at the apex $\theta$.
	
Combining the above estimations with the triangle inequality concludes the proof. \end{proof}

\begin{figure}
	\centering
	\includegraphics[scale=1.1]{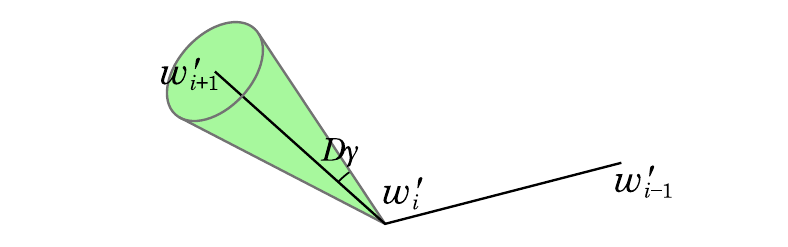}
	\caption{The angle between the edge of $\P$ and the edge of $P$ is less than $\mathcal D\gamma$.}
	\label{fig:angleOffset}
\end{figure}

\begin{proof}[Proof of Theorem~\ref{thm:precision}]
Place $P$ and $\P$ in such a way that
\begin{enumerate}
	\item a pair of their corresponding vertices, $u_1$ in $P$ and $v_1$ in $\P$, coincide,
	\item a pair of corresponding edges, $e'$ incident to $u_1$ in $P$ and $e$ incident to $v_1$ in $P$, lie on the same ray, and
	\item a pair of corresponding faces, $f'$ in $P$ incident to $u_1$ and $e'$ and $f$ in $\P$ incident to $v_1$ and $e$, lie on the same half-plane.
\end{enumerate}

Consider a pair of corresponding vertices, $u$ in $P$ and $v$ in $\P$. In order to estimate $|uv|$ consider a shortest path $\pi_1 = u_1w'_1w'_2\ldots w'_ku$ in the graph structure of polyhedron $P$. It is comprised of edges of $P$ and is not the geodesic shortest path from $u_1$ to $u$. Vertices of $\pi_1$ correspond to the vertices of another path $\pi_2 = v_1w_1w_2\ldots w_kv$ in $\P$. Since $\pi_1$ is a simple path, it contains at most $E$ edges and therefore its total length is at most $EL$.

We now focus on the paths themselves, not on the polyhedra. Path $\pi_2$ can be obtained from $\pi_1$ by a sequence of changes of edge directions
	(see Figures~\ref{fig:angleOffset},~\ref{fig:anglePathOffset})
and edge lengths (see Figure~\ref{fig:edgePathOffset}). Let us estimate by how much endpoint $u$ of path $\pi_1$ can move when this sequence of changes is applied.

Denote $w_0' \coloneqq u_1$, $w_0 \coloneqq v_1$ and assume that for each $j = 1, \ldots, i$ edge $w_{j-1}' w_j'$ is parallel to $w_{j-1} w_j$. Then, by the triangle inequality, the angle $\alpha$ between $w'_i w'_{i+1}$ and $w_i w_{i+1}$ is at most $\mathcal D\gamma$, see Figure~\ref{fig:angleOffset}. Rotate the path $w_i' \ldots w_k' u$ around $w_i'$ by angle $\alpha$ so $w_i' w_{i+1}'$ and $w_i w_{i+1}$ become parallel.

Distance $|w_i' u|$ is at most $EL$, so, by Lemma~\ref{lm:singlesegm}, every time we apply such rotation, the endpoint $u$ of path $\pi_1$ moves by at most $E L \cdot 2 \sin (\mathcal D \gamma / 2)$. Since there are at most $E$ vertices in the path and $E$ rotations are applied, the endpoint $u$ moves by at most
\begin{equation}
	\label{eq:anglePathOffset}
	E^2 \cdot  L \cdot 2 \sin\ll \frac{\mathcal D \gamma}{2} \rr.
\end{equation}

\begin{figure}
	\centering
	\begin{subfigure}[b]{0.44\columnwidth}
		\centering
		\includegraphics[scale=1.15]{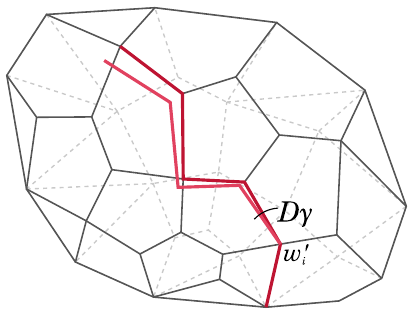}
		\caption{\ }
		\label{fig:anglePathOffset}
	\end{subfigure}
~
	\centering
	\begin{subfigure}[b]{0.44\columnwidth}
		\centering
		\includegraphics[scale=1.15]{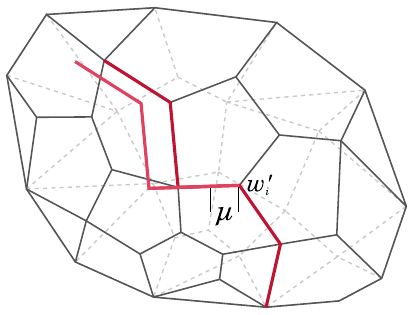}
		\caption{\ }
		\label{fig:edgePathOffset}
	\end{subfigure}
\caption{Illustration for the proof of Theorem~\ref{thm:precision}:
	{\itshape (a)}~Rotation by the angle less than $\mathcal D\gamma$
	is applied to the path $w'_i\ldots w'_k$. {\itshape (b)}~The edge
	$w'_iw'_{i+1}$ is being lengthened or shortened by not more
	than $\mu$. \vspace{-3mm}}
\end{figure}

Now that the directions of all the edges in path $\pi_1$ coincide with the directions of the edges in path $\pi_2$, we can make the lengths of corresponding edges match. If the length of a single edge of a path in $P$ is changed by at most $\mu$, and other edges are not changed (as shown in Figure~\ref{fig:edgePathOffset}), then the end of the path also moves by not more than $\mu$. Therefore after we adjust the length of all the edges, the endpoint $u$ of path $\pi_1$ moves by at most 
	$E \cdot \mu.~\refstepcounter{equation}\hfill(\theequation)\label{eq:edgePathOffset}$

Combining~(\ref{eq:anglePathOffset}) and~(\ref{eq:edgePathOffset}) implies that in total point $u$ moved by at most
\begin{equation}
	E^2 \cdot L \cdot 2 \sin ( \mathcal D \gamma / 2 ) + E \mu.
\end{equation}

This completes the proof. \end{proof}

\section{Geometric methods to determine graph structure} \label{section:geom}

In this section we give the last part of the proof that the polyhedra corresponding to the  gluings listed in Section~\ref{section:complete} have the same graph structure as the polyhedra listed in the same section. That is, we prove that quadrilateral faces of $P_{4,2}$, $P_{4,3}$ correspond to quadrlateral faces of $\P_{4,2}$, $\P_{4,3}$, i. e., that certain edges are not present in $\P_{4,2}$, $\P_{4,3}$.

\subsection{Quadrilateral faces of $\P_{4,2}$}

Recall that  $\P_{4,2}$ is the polyhedron that corresponds to the gluing $G_{4,2}$ (see Figure~\ref{fig:net42}). Let $A, B, \ldots, H$ denote the vertices of $G_{4,2}$, see Figure~\ref{fig:4-2razv}. We have already established by the methods of Section~\ref{section:algorithmic} that $\P_{4,2}$ has edges that are shown in the net on Figure~\ref{fig:net42} (black lines). We now prove the following.

\begin{theorem} \label{thm:4-2symm}
	For the polyhedron $\P_{4,2} = \P(G_{4,2})$, each of the 4-tuples of vertices $(G,H,C,D)$, $(A,B,H,G)$, $(E,F,C,B)$, $(A,D,F,E)$ forms a quadrilateral face of $\P_{4,2}$. 
\end{theorem}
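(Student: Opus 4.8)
The plan is to establish the four claimed quadrilateral faces by exploiting the symmetry of the gluing $G_{4,2}$ together with the edge information already obtained algorithmically. First I would identify the combinatorial symmetry group of the gluing: the four $4$-tuples $(G,H,C,D)$, $(A,B,H,G)$, $(E,F,C,B)$, $(A,D,F,E)$ are clearly permuted among themselves by a symmetry of the net, so it suffices to prove that one of them, say $(G,H,C,D)$, forms a planar quadrilateral face of $\P_{4,2}$; the others then follow by applying the corresponding isometries of the polyhedral metric, which by Alexandrov's uniqueness theorem (Theorem~\ref{thm:alexandrov}) extend to isometries of $\P_{4,2}$. The known edges (the black lines of Figure~\ref{fig:net42}) already tell us that $G$, $H$, $C$, $D$ are pairwise connected by edges of $\P_{4,2}$ along the boundary of the region in question, except possibly for the diagonal; so what must be ruled out is that the quadrilateral is creased along a diagonal (either $GC$ or $HD$), which would split it into two triangular faces.

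Next I would argue planarity of the four points $G, H, C, D$ directly from the metric. The region of the gluing bounded by the four boundary edges $GH$, $HC$, $CD$, $DG$ is intrinsically flat (it contains no cone point in its interior, since all pentagon vertices are cone points and lie on the boundary of this region), so it develops isometrically onto a planar quadrilateral $Q$ in $\mathbb{R}^2$. The claim ``$(G,H,C,D)$ is a face'' is then equivalent to saying that this flat piece sits in $\P_{4,2}$ as a flat (uncreased) quadrilateral, i.e. that the geodesic realizing the intrinsic distance between the two opposite vertices $G$ and $C$ (and likewise $H$ and $D$) stays inside this flat region and is a straight segment there — equivalently, that the intrinsic distance $|GC|$ in $M$ equals the length of the diagonal of the developed quadrilateral $Q$. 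I would compute the developed quadrilateral explicitly from the net (its four side lengths and the angles at the vertices are determined by how many pentagon corners meet there and along which pentagon edges the gluing runs), check that $Q$ is convex, and then verify that $|GC| = |G_QC_Q|$ using the explicit coordinates of the cone points that the approximate polyhedron / shortest-path computation provides; the symmetry of the net can be used to halve this work.

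The main obstacle I anticipate is the final step: showing that no shorter geodesic between the opposite vertices detours through the rest of the surface, which would force the diagonal to be an edge after all. To handle this I would reconstruct the gluing process around the flat region — unfold the neighboring pentagon pieces across the boundary edges $GH$, $HC$, $CD$, $DG$ into the plane and show that any path from $G$ to $C$ leaving the flat quadrilateral through one of these edges is already longer than the straight diagonal of $Q$ (a comparison of the straight chord against the path that must first reach and then cross a boundary edge, using convexity of $Q$ and the fact that the curvature concentrated at $H$ and $D$ is positive so geodesics bend away from them). Once planarity and the absence of a diagonal crease are both established for $(G,H,C,D)$, transporting the conclusion by the net's symmetries gives all four faces, completing the proof of Theorem~\ref{thm:4-2symm}.
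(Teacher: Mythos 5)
There is a genuine gap, and it lies at the heart of your argument. Your proposed criterion --- that $(G,H,C,D)$ is a face if and only if the intrinsic distance $|GC|$ in $M$ equals the diagonal of the developed planar quadrilateral $Q$ --- cannot work, because the intrinsic metric of the glued surface is identical whether the quadrilateral sits flat in $\P_{4,2}$ or is creased along a diagonal into two triangular faces. In the creased case the two triangles still develop onto the same flat $Q$, the crease itself is a geodesic of length exactly $|G_QC_Q|$, and so $|GC|=|G_QC_Q|$ holds in both scenarios. The presence or absence of an edge of $\P(G)$ is extrinsic information that Alexandrov's theorem guarantees is determined by the metric but does not let you read off from distances; this is precisely why the paper needs either the extrinsic ball-and-tangent-plane certificate of Theorem~\ref{thm:code-whattocheck} (to confirm edges) or a separate argument (to exclude them). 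Your third paragraph, about geodesics detouring outside the region, is likewise an intrinsic consideration and does not touch the creasing question.

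The paper's actual argument is short and purely combinatorial once the right symmetry is identified. The gluing $G_{4,2}$ admits a reflection (e.g.\ the plane $\zeta$ through $GH$ and the midpoints of $AD$, $EF$, $BC$), which by Alexandrov uniqueness is realized as an isometry of $\P_{4,2}$ itself. This reflection swaps the two diagonals $BF$ and $EC$ of the candidate face $EFCB$. If $BF$ were an edge of $\P_{4,2}$, then $EC$ would be one too; but $BF$ and $EC$ cross in the interior of the pentagon $EFCHB$ of the net, and two distinct edges cannot cross at an interior point of the surface --- contradiction. Hence neither diagonal is an edge and $EFCB$ is a quadrilateral face; the other three faces follow by the analogous symmetries. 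Note that this uses the symmetry \emph{within} each quadrilateral to pair up its two diagonals, which is a different (and stronger) use of symmetry than your reduction to a single representative 4-tuple. Your instinct to exploit the symmetry of the gluing is the right one; you need to push it one step further, to the conclusion that the edge set of $\P_{4,2}$ is invariant under the induced isometry, and then combine that with the non-crossing property of edges.
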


\begin{proof}

\begin{figure}[h]
	\input{figs/4-2symm}
\end{figure} 

\begin{figure}[h]
	\centering
	\includegraphics[width=8.2cm]{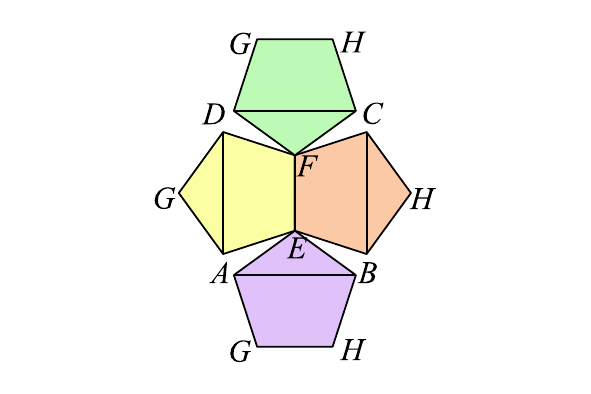}
	\caption{The net of $\P_{4,2}$.}
	\label{fig:4-2razv}
\end{figure}

	Observe first that there are two vertical planes such that $\P_{4,2}$ is symmetric with respect to both of them: (1) the plane $\zeta$ that passes through edge $GH$ (the common side of two pentagons), and the midpoints $M_1$, $M_2$, $M_3$ of edges $AD$, $EF$, $BC$ respectively (see Figure~\ref{fig:4-2symm}); and (2) the plane $\zeta'$ that passes through edge $EF$ and the midpoints of edges $AB$, $GH$ and $DC$. Indeed, polyhedron $\P_{4,2}$ is symmetric with respect to plane $\zeta$, since the segment $HM_2$ cuts in half the pentagon $EFCHB$ (colored orange in Figures~\ref{fig:4-2symm} and~\ref{fig:4-2razv}), and so does the segment $GM_2$ does with pentagon $FEAGD$ (colored yellow in Figures~\ref{fig:4-2symm} and~\ref{fig:4-2razv}). The argument for the plane $\zeta'$ is analogous. 

	Suppose for the sake of contradiction that $BF$ is an edge of $\P_{4,2}$. Then segment $EC$ must also be an edge due to the symmetry with respect to plane $\zeta$. However, segments $BF$ and $EC$ cross inside the pentagon $EFCHB$ and thus cannot be both the chords of the net of $\P_{4,2}$. We arrive to a contradiction. By the same argument $EC$ cannot be an edge of $\P_{4,2}$. Therefore $EFCB$ is a quadrilateral face of $\P_{4,2}$.
	
	The existence of quadrilateral faces $GHCD,\ ABHG,\ ADFE$ is implied by a symmetric argument. This completes the proof. \end{proof}

\subsection{Quadrilateral faces of $\P_{4,3}$}

Polyhedron  $\P_{4,3}$ is the polyhedron that corresponds to the gluing $G_{4,3}$ (see Figure~\ref{fig:net43}). Again let $A, B, \ldots H$ denote the vertices of $G_{4,3}$, see Figure~\ref{fig:4-3razv}. The chords shown in the net on Figure~\ref{fig:net43} (black lines) are already proven to be corresponding to the edges of  $\P_{4,3}$. We now prove the following.

\begin{theorem} \label{thm:4-3symm}
	For the polyhedron  $\P_{4,3} = \P(G_{4,3})$, each of the 4-tuples of vertices $(E,A,B,F)$, $(E,A,D,H)$, $(C,G,F,B)$, $(C,G,H,D)$, $(A,B,C,D)$, $(E,F,G,H)$  
	forms a quadrilateral face of $\P_{4,3}$. In particular, each of these faces is a parallelogram. 
\end{theorem}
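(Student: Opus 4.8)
The plan is to mimic the symmetry argument used for $\P_{4,2}$ in Theorem~\ref{thm:4-2symm}, but now exploiting a richer symmetry group. First I would identify the symmetry planes of $\P_{4,3}$ from the gluing $G_{4,3}$: the construction of $\P_{4,3}$ as a glued-pentagon polyhedron should admit (at least) three mutually perpendicular mirror planes, or equivalently enough reflections/rotations so that the six candidate $4$-tuples $(E,A,B,F)$, $(E,A,D,H)$, $(C,G,F,B)$, $(C,G,H,D)$, $(A,B,C,D)$, $(E,F,G,H)$ all lie in one orbit. Concretely I would look for, for each such $4$-tuple, a reflection plane that bisects the two pentagons of $G_{4,3}$ along a segment joining an existing edge to the midpoint of the opposite side, exactly as the segments $HM_2$ and $GM_2$ do in the proof of Theorem~\ref{thm:4-2symm}. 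Once a mirror plane $\zeta$ through one of these $4$-tuples is established, the argument is: if a diagonal of that tuple, say $AF$ in the tuple $(E,A,B,F)$, were an edge of $\P_{4,3}$, then its mirror image (the other diagonal, $EB$) would also be an edge; but the two diagonals of a planar quadrilateral cross, so they cannot both be non-crossing chords in a net of $\P_{4,3}$ (Definition~\ref{def:netDef}), contradiction. Hence neither diagonal is present, so the $4$-tuple bounds a genuine quadrilateral face.

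Next I would handle the four ``lateral'' faces $(E,A,B,F)$, $(E,A,D,H)$, $(C,G,F,B)$, $(C,G,H,D)$ first, since each of these is (half of) one of the two pentagons or spans the seam between them, so the bisecting segment is visibly a symmetry axis of the relevant pentagon; then the two ``cap'' faces $(A,B,C,D)$ and $(E,F,G,H)$ follow either from a third mirror plane (the one perpendicular to the seam edge $GH$ or $EF$) or by composing two of the already-established reflections. After all six $4$-tuples are shown to be faces, the last clause — that each is a \emph{parallelogram} — follows from combinatorics and symmetry rather than new geometry: once we know $\P_{4,3}$ has exactly these six quadrilateral faces and $8$ vertices each of degree $3$ (as asserted in Section~\ref{section:complete}, where $\P_{4,3}$ is called a parallelepiped), each pair of opposite faces is parallel because the mirror symmetry swaps a face with the one sharing no vertex with it, and opposite edges of each quadrilateral are identified with edges of adjacent faces that the symmetry forces to be equal and parallel; alternatively, a combinatorial hexahedron all of whose faces are quadrilaterals with the parallelism induced by three perpendicular mirrors is a parallelepiped. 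I would phrase this last step by noting that opposite sides of, e.g., $ABCD$ are $AB$ and $DC$, which are edges shared with the lateral faces $(E,A,B,F)$ and $(C,G,H,D)$ respectively, and a reflection of $\P_{4,3}$ carries $AB$ to $DC$, forcing $|AB| = |DC|$ and $AB \parallel DC$; doing this for both pairs of opposite sides of each face gives the parallelogram claim.

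The main obstacle I anticipate is pinning down exactly which reflections $\P_{4,3}$ possesses, and in particular justifying the ``cap'' faces $(A,B,C,D)$ and $(E,F,G,H)$, which are not sub-polygons of a single pentagon of the gluing and so the bisecting-segment trick does not apply directly to them. For those I would argue that the composition of the two lateral mirror planes is a rotation by $\pi$ about a line, and that this rotation maps the $4$-tuple $(A,B,C,D)$ to itself while swapping its two diagonals, so the same crossing-chords contradiction applies; verifying that this rotation is genuinely a symmetry of $\P_{4,3}$ reduces, via Alexandrov's uniqueness (Theorem~\ref{thm:alexandrov}), to checking that it is an isometry of the polyhedral metric $M_{4,3}$, which is a finite check on the gluing diagram in Figure~\ref{fig:net43}. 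A secondary subtlety is ensuring there are no \emph{other} diagonals or edges lurking — but since the algorithmic method of Section~\ref{section:algorithmic} has already certified all the black-line edges of the net, and Proposition~\ref{prop:upperbound} together with the degree sequence fixes $V=8$, $E=12$, $F=6$, Euler's formula leaves no room for additional edges once the six quadrilaterals are shown to be faces, so the enumeration closes.
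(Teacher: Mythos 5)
Your plan transplants the $\P_{4,2}$ argument, but the key mechanism does not survive the transplant. The crossing--diagonals trick needs an isometry of the polyhedral metric that \emph{exchanges} the two diagonals of the candidate face; any such isometry forces the two diagonals to have equal geodesic length. In $\P_{4,2}$ this is fine because the quadrilateral faces are isosceles trapezoids. In $\P_{4,3}$ every one of the six candidate faces is a parallelogram with unequal diagonals, and the inequality is visible already in the gluing: for $(E,A,B,F)$ the diagonal $AF$ is a unit pentagon side (the seam between the two pentagons), while $EB$ joins points at geodesic distance $\sqrt{1+2\varphi^2}$ (with $\varphi$ the golden ratio); for $(A,B,C,D)$ the diagonal $BD$ is again a unit pentagon side while $A$ and $C$ are much farther apart. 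Since these distances are intrinsic to the metric, no self-isometry of $G_{4,3}$ can swap either pair, so the symmetry you are looking for does not exist for \emph{any} of the six faces --- not only the two ``cap'' faces you flagged as problematic. Your proposed rescue for the caps, composing two mirrors to get a $\pi$-rotation ``swapping the diagonals'' of $(A,B,C,D)$, fails for the same reason: the natural $\pi$-rotation of a parallelepiped about the axis through opposite face centers fixes each diagonal setwise, and a map genuinely interchanging $AC$ with $BD$ would have to equate their lengths. The only mirror the paper actually uses for $\P_{4,3}$ is the plane $EACG$, which fixes both diagonals of $ABCD$ and merely permutes the lateral faces among themselves, so it rules out nothing.

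The paper therefore abandons symmetry entirely for $\P_{4,3}$ and argues in the opposite direction: it \emph{constructs} a convex polyhedron realizing the net of Figure~\ref{fig:net43} in which the six $4$-tuples are visibly planar --- folding the first pentagon symmetrically about the plane $EAM_1$, attaching the second so that the planes $EHF$ and $ADB$ are parallel, and certifying coplanarity of each $4$-tuple by an angle count such as $\magl FEA + \magl EAF + \magl FAB = \pi$, with $HDBF$ a square serving as the gluing interface of the two halves --- and then invokes the uniqueness half of Theorem~\ref{thm:alexandrov} to conclude this polyhedron \emph{is} $\P_{4,3}$. If you want to keep a non-constructive flavour, you would need a different device for excluding the diagonals (e.g.\ the algorithmic certificate of Section~\ref{section:algorithmic} only certifies presence, never absence, of edges, so it cannot close the gap either). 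Your closing Euler-characteristic remark is fine but only kicks in after the six faces are established, which is exactly the step that is missing.
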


\begin{proof}

\begin{figure}[h]
	\input{figs/4-3parall}
\end{figure}

\begin{figure}[h]
	\centering
	\includegraphics[width=8.2cm]{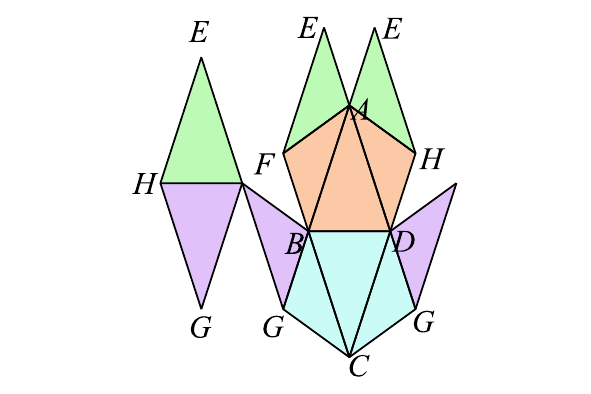}
	\caption{The net of $\P_{4,3}$.}
	\label{fig:4-3razv}
\end{figure}

We show that there is a convex polyhedron with the net as  in  Figure~\ref{fig:4-3razv} that satisfies the claim. By Alexandrov's theorem such polyhedron is unique and is exactly $\P_{4,3}$.  

	The pentagon $EAFHA$ (colored green in Figure~\ref{fig:4-3razv}) is folded along its diagonals $EF$ and $EH$ and glued along its edge $EA$. We use one degree of freedom to place it so that it is symmetric with respect to the plane through  $EAM_1$, where $M_1$ is the midpoint of $HF$. Let us now take another pentagon $AFBDH$ and glue one of its vertices to $A$. Place this pentagon in a way that the plane $ADB$ is parallel to the plane $EHF$ (see the orange pentagon in Figure~\ref{fig:4-3razv}). Now we glue these two pentagons along the edges $AF$ and $AH$ without changing the position of the triangle $ADB$. Since $\magl FEA + \magl EAF + \magl FAB = \pi$, the points $E, A, B, F$ are coplanar and form a parallelogram. By analogous arguments, $EADH$, $CGFB$, and $CGHD$ are parallelograms as well.
	
	It is easy to see that the shape we just obtained by gluing the pentagons $EAFHA$ and $AHDBF$ is still symmetric with respect to the plane $EAM_1$, and the planes $EHF$ and $ADB$ are parallel.

	Now let us show that points $H, D, B, F$ are coplanar and form a square $HDBF$. Indeed, all of its sides are have equal length as sides of a regular pentagon, and it has an axis of symmetry passing through the midpoints $M_1$ and $M_2$ of its opposite sides. Now if we glue the two halves of the polyhedron along this common square, the triangles $CDB$ and $ADB$ will be coplanar, since $\magl CM_2M_1 = \magl EM_1M_2$ and $\magl EM_1M_2 + \magl M_1M_2A = \pi$.

	Since $|AD| = |DC| = |CB| = |BA|$ as diagonals of a regular pentagon, $ADCB$ is a rhombus. By a similar argument, $EHGF$ is a rhombus as well. This completes the proof. \end{proof}

\bibliography{zba_ths}{}

\begin{thebibliography}{10}

\bibitem{alex}
Alexandr Alexandrov.
\newblock {\em Convex Polyhedra}.
\newblock Springer-Verlag, Berlin, 2005.
\newblock Translation from Russian.

\bibitem{kl17-hex}
Elena Arseneva and Stefan Langerman.
\newblock {Which} {Convex} {Polyhedra} {Can} {Be} {Made} by {Gluing} {Regular}
  {Hexagons}?
\newblock {\em Graphs and Combinatorics}, page 1–7, 2019.

\bibitem{boben}
A.I. Bobenko and I.~Izmestiev.
\newblock Alexandrov's theorem, weighted {Delaunay} triangulations, and mixed
  volumes.
\newblock {\em Annales de l'Institut Fourier}, 58(2):447--505, 2008.

\bibitem{DDLO02}
Erik Demaine, Martin Demaine, Anna Lubiw, and Joseph O'Rourke.
\newblock Enumerating foldings and unfoldings between polygons and polytopes.
\newblock {\em Graphs and Combinatorics}, 18(1):93--104, 2002.

\bibitem{ddlop99}
Erik Demaine, Martin Demaine, Anna Lubiw, Joseph O'Rourke, and Irena
  Pashchenko.
\newblock Metamorphosis of the cube.
\newblock In {\em Proc. SOCG}, pages 409--410. ACM, 1999.
\newblock video and abstract.

\bibitem{DO07}
Erik Demaine and Joseph O'Rourke.
\newblock {\em Geometric folding algorithms}.
\newblock Cambridge University Press, 2007.

\bibitem{bannister2014galois}
David Eppstein, Michael~J Bannister, William~E Devanny, and Michael~T Goodrich.
\newblock The galois complexity of graph drawing: Why numerical solutions are
  ubiquitous for force-directed, spectral, and circle packing drawings.
\newblock In {\em International Symposium on Graph Drawing}, pages 149--161.
  Springer, 2014.

\bibitem{kpd09-approx}
Daniel~M Kane, Gregory~N Price, and Erik~D Demaine.
\newblock A {Pseudopolynomial} {Algorithm} for {Alexandrov}'s {Theorem.}
\newblock In {\em WADS}, pages 435--446. Springer, 2009.

\bibitem{lo96-dynprog}
Anna Lubiw and Joseph O'Rourke.
\newblock When can a polygon fold to a polytope?, 1996.
\newblock Technical Report 048, Department of Computer Science, Smith College,
  Northampton, MA. Presented at AMS Conf., 1996.

\bibitem{sech}
S.~Sechelmann.
\newblock {Discrete} {Minimal} {Surfaces,} {Koebe} {Polyhedra,} and
  {Alexandrov}’s {Theorem.} {Variational} {Principles,} {Algorithms,} and
  {Implementation}.
\newblock {\em Diploma Thesis, Technische Universit\"at Berlin}, 2007.

\end{thebibliography}
\bibliographystyle{plain}

\end{document}